\newtheorem{theorem}{Theorem}
\newtheorem{definition}{Definition}
\newtheorem{lemma}{Lemma}
\newtheorem{claim}{Claim}
\newtheorem{corollary}{Corollary}
\newcommand{\cB}{{\mathcal B}\xspace}
\newcommand{\cG}{{\mathcal G}\xspace}
\title{Approximation Algorithms for Clustering with Minimum Sum of Radii,  Diameters, and Squared Radii
}
\author{
  Z Friggstad, M Jamshidian \\
  Computing Science Department \\
  University of Alberta \\
  Edmonton, Alberta, Canada\\
  \texttt{\{zacharyf, mjamshidian\}@ualberta.ca} \\
}
\begin{document}
\maketitle

\begin{abstract}
In this study, we present an improved approximation algorithm for three related problems.
   In the Minimum Sum of Radii clustering problem (MSR), we aim to select $k$ balls in a metric space to cover all points while minimizing the sum of the radii. 
   In the Minimum Sum of Diameters clustering problem (MSD), we are to pick $k$ clusters to cover all the points such that sum of diameters of all the clusters is minimized. 
   At last, in the Minimum Sum of Squared Radii problem (MSSR), the goal is to choose $k$ balls, similar to MSR.
   However in MSSR, the goal is to minimize the sum of squares of radii of the balls. 
   We present a 3.389-approximation for MSR and a 6.546-approximation for MSD, improving over respective 3.504 and 7.008 developed by Charikar and Panigrahy (2001). 
   In particular, our guarantee for MSD is better than twice our guarantee for MSR.
   In the case of MSSR, the best known approximation guarantee is $4\cdot(540)^{2}$ based on the work of  Bhowmick, Inamdar, and Varadarajan in their general analysis of the $t$-Metric Multicover Problem. 
   At last with our analysis, we get a 11.078-approximation algorithm for Minimum Sum of Squared Radii.
\end{abstract}


\section{Introduction}
Clustering, as one of the fundamental problems in information technology, has been studied in computing science and several other fields to a great extent.
Different methods of clustering have been used significantly in data mining, bio-informatics, pattern recognition, computer vision, etc. 
The goal of clustering is to partition a set of data points into partitions, called clusters.
Many of clustering problems involve finding $k$ cluster centers and a mapping $\sigma$ from data points to the centers to minimize some objective function.
One of the most studied such objective functions is \textsc{$k$-Center} which minimizes the maximum diameter (or radius) \cite{DYER1985,dorit1985}. 
Another example is the \textsc{$k$-Median} problem which aims to minimize sum of distances from data points to their centers, as extensively studied in \cite{charikar1999,charikar2002,jain2001,kariv1979,lin1992}.

Another model of clustering is the \textsc{Facility Location Problem}, in which a set of clients are connected to facilities such that the cost of opening all facility and sum of distances from clients to facilities are minimized. 
Studies on FCL have led to significant advances in understanding the problem and its objective function, as discussed in \cite{charikar1999,chudak1999_c,chudak2003,Guha1999,jain2001,madhukar2000}.

In the late 20th century, \cite{Hansen1987} introduced a phenomenon called \textit{Dissection Effect} in the $k$-center problem where nodes that belong to the same cluster are placed in different ones as otherwise it would have increased the objective function. 
This motivates introduction of the ``Minimum Sum Diameters" (MSD) problem where the objective function is to minimize sum of diameters of all clusters. 
\cite{monma91} showed that the modified objective function is useful as it reduces the dissection effect. 

In this paper, we focus on a different objective function for clustering that is more center-focused in that the cost of a cluster is the radius of the ball used to cover that cluster. Specifically, we study the following problem.
\begin{definition}
In the \textsc{Minimum Sum of Radii} problem (MSR), we are given a set $X$ of $n$ points in a metric space with distances $d$ and a positive integer $k$. We are to select centers $C \subseteq X$, $|C| \leq k$ and assign each $i \in C$ a {\em radius} $r_i$
so that each $j \in X$ lies within distance $r_i$ of at least one $i \in C$ (i.e. $d(j,i) \leq r_i$). The goal is to minimize the total radii, i.e. $\sum_{i \in C} r_i$.
\end{definition}

That is, we want to cover $X$ using at most $k$ balls with minimum total radius. For example, perhaps we want to broadcast messages to all points by selecting $k$ sources with minimum total broadcast radius.

We also consider the related problem to minimize sum of diameters of the clusters chosen. Note this variant is simply about partitioning the point set, there are no centers involved.
\begin{definition}
In the \textsc{Minimums Sum of Diameters} problem (MSD), the input is the same as in MSR and our goal is to partition the points into $k$ clusters $X_1, X_2, \ldots, X_k$ to minimize $\sum_{i = 1}^k \max_{j,j' \in X_i} d(j,j')$, the sum of the diameters of the clusters.
\end{definition}

It is easy to see that an $\alpha$-approximation algorithm for MSR yields a $2\alpha$-approximation algorithm for MSD. That is, if $OPT_R$ denotes the optimum MSR solution cost and $OPT_D$ an optimum MSD solution cost, we have $OPT_R \leq OPT_D$ because in the optimum MSD solution we could pick any point from each cluster to act as its center (with radius equal to the diameter of the cluster). So if we have an MSR solution with cost at most $\alpha \cdot OPT_R$, then if we define clusters $X_i$ by sending each point to some center whose ball covers that point, the diameter of cluster $i$ would be $\leq 2 \cdot r_i$ so the sum of diameters would then be at most $2\alpha \cdot OPT_R \leq 2\alpha \cdot OPT_D$.

On the other hand, \cite{Doddi2000} showed that there is no $(2-\epsilon)$-approximation algorithm for MSD unless $P=NP$.
Their algorithm obtains a logarithmic approximation with a constant factor increase in number of clusters.
For a constant number of clusters, they give a $2$ approximation.

As an intuition to the MSR problem, centers can be thought of as prospective mobile tower locations, whereas the points in $\mathcal{X}$ can be thought of as client locations. A tower can be set up in such a way that it can service consumers within a given radius. However, the cost of service rises with the broadcast distance travelled. 
The goal is to serve all clients at the lowest possible cost. 

When calculating the amount of energy required for wireless transmission, it is typical to think about the cost function to be Minimum Sum of Squared Radii.
In reality, it requires power proportionate to $r^2$ to broadcast up to a certain radius $r$. This inspires a form of MSR in which we want to reduce the sum of the squares of the radii, i.e. the total broadcast power. The Minimum Sum of Squared Radii (MSSR) issue is what we refer to as.

A multi-cover variant, defined in \cite{bhowmick2017metric}, has also been considered.
\begin{definition}
In the \textsc{$t$ Metric Multi Cover} problem ($t$-MMC), the input is the sets $X$ and $Y$, representing Data Points and centers candidate set and our goal is to pick $t$ pairs of centers and radii $(c_{1},r_{1}),\cdots,(c_{t}, r_{t})$ such that each data point is present in at least $k$ of the balls while minimizing the objective function $\sum_{t}(r_{i})^{\alpha}$ where $\alpha$ and $k$ are given in the input.
\end{definition}
Previous works on the more general problem of $t$-MMC have established the guarantee of $4\cdot (540)^2$ \cite{Bhowmick2021}. We note that this was the best approximation for the natural special case MSSR before our work.

At last, the related problems in Euclidean space have received attention over the years. 
There is an exact polynomial-time algorithm for the Euclidean MSR problem as discussed in \cite{Gibson2009}.
Here we aim to consider the modification of the problem in a high-dimensional Euclidean space with an added flexibility to the center picking procedure. 
That is, one can pick any point in the space as the center. 
In what follows, we describe at details the improvements established in the approximation factors of the aforementioned problems, as well as introducing the novel and generic method for the bi-point analysis of the problems.

\subsection{Our Contribution}

In this paper, we first present an improved polynomial-time approximation algorithm for MSR. Specifically, we prove the following.
\begin{theorem}\label{thm:msr_main}
There is a polynomial-time 3.389-approximation for MSR.
\end{theorem}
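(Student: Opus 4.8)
The plan is to approach MSR through its natural covering LP, relax the cardinality constraint by Lagrangian duality, and finish with a bi-point rounding step. Let $\mathcal{B}$ be the polynomially many candidate balls $b=(i,r_b)$ with $i \in X$ and $r_b$ an inter-point distance. The LP has a variable $x_b \in [0,1]$ per ball, coverage constraints $\sum_{b \ni j} x_b \ge 1$ for each $j \in X$, a cardinality constraint $\sum_b x_b \le k$, and objective $\min \sum_b r_b x_b$; its value is a lower bound on $OPT_R$. First I would move the cardinality constraint into the objective with a multiplier $z \ge 0$, giving a penalized covering problem in which each ball $b$ costs $r_b + z$ and every point must be covered, with dual variables $\alpha_j$ satisfying $\sum_{j \in b}\alpha_j \le r_b + z$ for all $b$.

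For a fixed $z$, I would run a primal-dual algorithm in the spirit of Charikar and Panigrahy: raise all $\alpha_j$ uniformly, freeze a point once a ball containing it becomes tight, mark tight balls as candidates, and select a disjoint subfamily of candidates whose slightly enlarged versions (grown via the triangle inequality to reach all frozen points they ``paid for'') still cover $X$. The target is a Lagrangian-multiplier-preserving bound of the form $\sum_{i \in C_z} r_i + \rho\, z\, |C_z| \le \rho\,(OPT_R + z k)$ for a constant $\rho$; the factor $\rho$ is governed precisely by how much these selected balls must be enlarged, so tightening that enlargement analysis is where $\rho$ is minimized.

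Next I would vary $z$. As the number of opened balls is essentially monotone in $z$ (large penalties yield few balls, small penalties many), a binary search locates two nearby penalties producing solutions $S_1, S_2$ with $|S_1| = a_1 \le k \le a_2 = |S_2|$. Choosing the convex weight $\lambda$ with $\lambda a_1 + (1-\lambda) a_2 = k$ makes the penalty terms cancel against the cardinality budget, so the bi-point quantity $\lambda \sum_{i \in S_1} r_i + (1-\lambda)\sum_{i \in S_2} r_i$ is at most $\rho \cdot OPT_R$ while using fractionally $k$ balls. (If some $z$ opens exactly $k$ balls we stop with ratio $\rho$, and the standard device handles the case where $a_2 - a_1$ is large.)

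The heart of the argument, and the step I expect to be the main obstacle, is the bi-point rounding: turning the pair $(S_1, S_2)$ into a single integral solution with at most $k$ balls. The natural strategies are to keep the cheap backbone $S_1$ and absorb each ball of $S_2$ into a nearby ball of $S_1$, enlarging the latter by the connecting distance so that coverage is preserved, or dually to keep $S_2$ and discard its costliest balls; each yields a bound of the form $\beta_1 \sum_{i \in S_1} r_i + \beta_2 \sum_{i \in S_2} r_i$. Taking the better of the two (or randomizing between them) and then optimizing over $\lambda$ together with the internal thresholds of the rounding produces the final factor as the product $\rho \cdot \beta$. Carrying out this optimization tightly---rather than invoking the crude estimate that a bi-point solution costs twice a single one---is exactly the generic bi-point analysis that drives the constant below the Charikar--Panigrahy value of $3.504$ to $3.389$; the delicate point is bounding the triangle-inequality blow-up sharply enough, since naive merging loses too much.
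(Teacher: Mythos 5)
Your high-level skeleton---covering LP, Lagrangian relaxation of the cardinality constraint, an LMP subroutine, binary search to a bi-point pair, then rounding the pair---is indeed the framework of the paper (and of Charikar and Panigrahy). But your proposal defers exactly the step that produces the number $3.389$, and what you do sketch for that step is the Charikar--Panigrahy merging, which yields $3.504$, not $3.389$. Concretely: when a group $G_{(i,r)}$ of balls from the large solution is replaced by a single ball, Charikar--Panigrahy center that ball at the intersecting ball $(i,r)$ of the small solution, and the radius needed is at most $r + 4\cdot cost(G_{(i,r)})$. The paper's improvement is to consider \emph{three} candidate centers---$i_1 = i$ itself, $i_2$ the center of the largest ball in the group, and $i_3$ a point in the intersection of those two balls---and a case analysis on the two largest radii in the group shows that the best of the three choices needs radius at most $\frac{11}{8}\cdot r + 3\cdot cost(G_{(i,r)})$. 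Trading a larger coefficient on $r$ for coefficient $3$ (rather than $4$) on the group cost is what makes the balanced minimum $\min\bigl\{3 C_2,\ 3 C_1 + \frac{11}{8}\, b\, C_2\bigr\} \leq \frac{288}{85}\,(a C_1 + b C_2)$ with worst case $b = 1/2$, i.e.\ $3.389$. Merely ``optimizing the internal thresholds tightly'' of the single-center merging cannot reach this; the set of candidate centers must be enlarged, and that is the paper's main new idea. Your description of the final ratio as a product $\rho\cdot\beta$ of an LMP factor and a rounding factor is also structurally wrong: in the correct analysis the bi-point average of the \emph{untripled} costs is bounded by $OPT$ itself (factor $1$), and the factor $3$ from tripling and the $\frac{11}{8}$ coefficient enter a single combined balancing, not multiplicatively.

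Two further ingredients you omit are needed for the argument to go through at all. First, the choice of which groups to merge (versus keeping their balls tripled) must respect the budget $k$; the paper handles this with an auxiliary LP over groups whose optimal extreme point has at most one fractional variable, and rounding that variable up costs one extra ball of radius $O(R_m)$. This additive loss is affordable only because of a guessing step you never perform: enumerating the $1/\epsilon$ largest balls of an optimal solution, so that every remaining optimal ball---and hence every radius the algorithm manipulates---is at most $R_m \leq \epsilon\cdot OPT$. Without guessing, these additive corrections are unbounded relative to $OPT$. Second, the grouping is only well defined if both solutions come from a \emph{single} multiplier $\lambda$ and every ball of the large solution meets some ball of the small one; the paper must analyze break points of the Lagrangian LP and run a FILL procedure to guarantee this, and your parenthetical ``standard device'' does not address it. As written, your outline supports (modulo these repairs) only the known $3.504$ bound; the claim of $3.389$ is unsupported.
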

We obtain this primarily by refining a so-called bi-point rounding step from \cite{charika04}.
That is, our improvement for MSR mainly focuses in the last phase of the algorithm in  \cite{charika04} which combines two subsets of balls that, together, open an {\em average} of $k$ centers and whose average cost is low. Their algorithm focuses on selecting $k$ of the centers from these two subsets. We expand the set of possible centers to choose and consider some that may not be centers in the averaging of the two subsets.

 We also present an alternative method for obtaining these two subsets of balls by considering a simple rounding of a linear programming (LP) relaxation, the Lagrangian relaxation of the problem obtained by relaxing the constraint that at most $k$ centers are chosen, rather than the primal-dual technique used in \cite{charika04}.The rounding algorithm is incredibly simple and we employ fairly generic arguments to convert it to a bi-point solution for a single Lagrangian multiplier $\lambda$, this approach may be of independent interest as it should be easy to adapt to other settings where one wants to get a bi-point solution where both points are obtained from a common Lagrangian value $\lambda$, as long as the LMP approximation is from direct LP rounding. However, we emphasize that we could work directly with their primal-dual approach.

Our second result is an improved MSD approximation that does not just use our MSR approximation as a black box.
\begin{theorem}
There is a polynomial-time 6.546-approximation for MSD.
\end{theorem}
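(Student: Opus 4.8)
The plan is to avoid the black-box reduction, which loses a factor of two in two separate places and hence cannot beat $2 \cdot 3.389 = 6.778$, and instead to carry cluster diameters through the same bi-point machinery that establishes Theorem~\ref{thm:msr_main}. The reduction's two losses are the inequality $OPT_R \le OPT_D$ and the pointwise bound ``diameter $\le 2 \cdot$ radius'' applied to the final output balls. The idea is that when the MSR rounding \emph{expands} a ball of radius $r$ into a larger ball of radius $c\cdot r$ (the expansion factor $c>1$ is what drives the combination of the two bi-point solutions), we should not double the expanded radius. Instead we should bound the diameter of the corresponding MSD cluster directly, decoupling the factor-of-two doubling (which need only be paid against the constituent small balls' radii) from the expansion offset (the inter-center distances, which is paid only once). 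Because these two effects compound in the naive argument but not in the direct one, the resulting MSD factor should fall strictly below $6.778$.

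Concretely, I would reuse the entire front end unchanged: the Lagrangian relaxation, the LP-rounding that produces a bi-point solution for a single multiplier $\lambda$, and the two integral solutions $S_1, S_2$ together with the convex multiplier $\mu$ satisfying $\mu|S_1| + (1-\mu)|S_2| = k$ whose averaged radius cost is bounded against the LP optimum, and hence against $OPT_R \le OPT_D$. Only the final combination step changes. There, rather than selecting $k$ balls and charging each cluster its expanded radius, I would form the same $k$ clusters but charge each one its diameter, re-deriving a per-cluster bound from the triangle inequality applied to the constituent balls' radii and their inter-center distances. For instance, when a chosen ball of radius $r_1$ absorbs a nearby ball of radius $r_2$, the merged cluster's diameter is at most $2r_1 + 2r_2$ rather than twice the expanded radius; in the configurations that bind the analysis this is strictly smaller, and summing such bounds and invoking the bi-point cost estimate yields the target guarantee in terms of $OPT_D$.

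I expect the re-optimization to be the main obstacle. In the MSR analysis the factor $3.389$ emerges as the maximum, over the bi-point parameters, of a handful of regime-dependent expressions, with the worst case attained in one specific regime and at one specific choice of the expansion factor. Once diameters are charged directly, each of these expressions is rescaled by a different amount, so both the binding regime and the optimal expansion factor move, and I would have to recompute the trade-off and verify that its maximum is exactly $6.546$. A secondary but essential check is that every lower-bound inequality appearing where a diameter is charged is stated against $OPT_D$, and that no merged cluster silently reintroduces a full factor-of-two loss stacked on top of the $OPT_R \le OPT_D$ step. The structural components of the algorithm carry over verbatim; the crux is the merged-cluster diameter bound together with the fresh optimization of the combination parameters.
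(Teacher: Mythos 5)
Your proposal follows essentially the same route as the paper: keep the guessing step, Lagrangian/bi-point front end, and grouping intact, bound the bi-point cost against $OPT_D$ via $OPT_R \le OPT_D$, and in the final combination step charge each merged group its \emph{diameter} directly via the triangle inequality (the paper proves $diam(B'_{(i,r)}) \le 2r + 6 C_{(i,r)}$, decoupling the doubling of the small radii from the inter-center offset exactly as you describe), then re-optimize the LP-Choose/convex-combination trade-off, whose worst case at $b=1/2$ gives $72/11 \approx 6.546$. Your plan is correct in structure and the remaining work you flag (recomputing the binding regime and the combination parameter) is exactly the computation the paper carries out.
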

In particular, notice the guarantee is better than twice our approximation guarantee for MSR. This is obtained through a variation of our new ideas behind our MSR approximation.

Finally, we get to discuss the MSSR problem, using the same machinery. 
The algorithm makes use of the same bi-point rounding and placing of the centers. 
At last, we have the following result. 
\begin{theorem}
There is a polynomial-time 11.078-approximation for MSSR.
\end{theorem}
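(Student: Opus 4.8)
The plan is to mirror the Lagrangian-relaxation framework behind our MSR result, replacing the linear objective $\sum_i r_i$ by the squared objective $\sum_i r_i^2$ throughout, and then to re-analyze how the quadratic cost propagates through the rounding. First I would write the natural covering LP for MSSR: over the polynomially-many candidate balls $(i,r)$, with $i \in X$ and $r$ ranging over the $O(n^2)$ distinct pairwise distances, introduce $x_{i,r} \in [0,1]$, minimize $\sum_{i,r} r^2 x_{i,r}$ subject to the covering constraints $\sum_{(i,r)\,:\,d(i,j)\le r} x_{i,r} \ge 1$ for every $j \in X$ and the cardinality constraint $\sum_{i,r} x_{i,r} \le k$. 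Since the integral optimum is LP-feasible we have $\mathrm{LP} \le OPT_S$, where $OPT_S$ is the optimum MSSR value. Relaxing the cardinality constraint into the objective with a multiplier $\lambda \ge 0$ then yields a pure covering LP whose effective cost per opened ball is $r^2 + \lambda$.

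The second step is an LMP (Lagrangian-multiplier-preserving) rounding of this relaxed LP, reusing the \emph{same} direct LP-rounding argument as in the MSR case: extract an integral cover whose retained balls have pairwise far cores and whose radii are expanded by a bounded factor so that every fractionally covered point becomes integrally covered. The only change is the accounting — an expansion of a radius $r$ to $\gamma r$ now contributes $\gamma^2 r^2$ rather than $\gamma r$ — so the LMP factor for MSSR is governed by the \emph{square} of the expansion factor. I expect a guarantee of the form $\sum_{i \in S}(r_i^2 + \lambda) \le \rho_S \cdot \mathrm{LP}_\lambda$ for an explicit constant $\rho_S$, and crucially this holds for each fixed $\lambda$, which is precisely what the bi-point step consumes.

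Next, by the same generic binary search on $\lambda$ used for MSR, I would locate a value producing two integral covers $S_1, S_2$ with $|S_1| = k_1 \le k \le k_2 = |S_2|$, both LMP-approximate for that common $\lambda$. Taking the convex combination with weights $a,b \ge 0$, $a+b=1$, $a k_1 + b k_2 = k$, the two LMP inequalities combine (the $\lambda$-terms cancel against $k$) to bound the averaged squared cost $a\cdot\mathrm{cost}(S_1) + b\cdot\mathrm{cost}(S_2) \le \rho_S \cdot OPT_S$, where $\mathrm{cost}(S) = \sum_{i\in S} r_i^2$. This is the bi-point solution, produced from a single Lagrangian value exactly as in our MSR analysis, and the center-placement/rounding procedure that converts it to a single cover of at most $k$ balls is literally the same algorithm as in the MSR case.

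The final and hardest part is therefore not designing a new algorithm but re-analyzing the cost of its output under the squared objective. Starting from the cheaper side $S_1$ and spending the budget of $k-k_1$ extra balls on balls of $S_2$ (and, as in our MSR improvement, on candidate centers in neither $S_1$ nor $S_2$), each time a retained ball must absorb a nearby uncovered cluster we expand its radius, and this expansion is now charged \emph{quadratically}: where the MSR analysis pays an additive $O(r)$ per merge by the triangle inequality, the squared version pays $O((r+r')^2)$, and a naive bound would roughly square the MSR factor. The hard part will be to control this blow-up by a careful case analysis on the relative radii of merged balls — using $(r+r')^2 \le 2r^2 + 2r'^2$ when the radii are comparable and charging to the dominant ball otherwise — and then optimizing the free parameters (the weights $a,b$ and the radius-expansion thresholds deciding when a ball is retained versus replaced). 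I expect this accounting to beat the pessimistic product of the two losses: just as our MSD bound fell below twice the MSR bound, the target $11.078$ lies below the square $3.389^2 \approx 11.49$ of our MSR guarantee, so the squaring does not simply compound.
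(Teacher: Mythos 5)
Your proposal reproduces the paper's architecture faithfully: the same covering LP with ball costs $r^2+\lambda$ after Lagrangifying the cardinality constraint, the same LMP rounding and binary search producing a bi-point solution $\cB_1,\cB_2$ for a single $\lambda$, and the same final step of either tripling $\cB_2$ (cost $9\,C_2$ under the squared objective) or merging each group of $\cB_1$-balls into a single ball, with an LP-Choose step deciding which groups to merge. So the approach is not different from the paper's --- but there is a genuine gap: everything that actually produces the number $11.078$ is deferred or conjectured. The paper's proof hinges on a specific quantitative claim: by choosing the merge center among $i_1,i_2,i_3$ according to thresholds re-optimized for the squared objective (the case split becomes $R_2 \gtrless \tfrac12 R_1$, with sub-cases on $R_3/R_2$), the squared radius of the merged ball is at most $\tfrac{27}{4}r^2 + 9\,C_{(i,r)}$, where $C_{(i,r)}$ now denotes the total \emph{squared} radii of the group. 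The coefficient $9$ must exactly match the tripling factor $3^2$, and the coefficient $\tfrac{27}{4}$ on $r^2$ is what enters the final optimization
\[ \min\left\{9C_2,\; 9C_1 + \tfrac{27}{4}\,b\,C_2\right\} \;\le\; \frac{81}{\frac{27}{4}b^2 - \frac{27}{4}b + 9}\,(aC_1+bC_2) \;\le\; \frac{1296}{117}\,(aC_1+bC_2), \]
with the worst case at $b=\tfrac12$ giving $1296/117 \approx 11.077$. Without deriving these two coefficients and carrying out this optimization, ``11.078'' is an assertion, not a theorem.

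Moreover, the one concrete tool you propose for the quadratic accounting --- $(r+r')^2 \le 2r^2+2r'^2$ when radii are comparable --- is too lossy to reach this constant. Applied, for instance, to the center-$i_1$ bound $R_1+4R_2$, it yields $2R_1^2 + 32R_2^2$: the coefficient $32$ on $R_2^2$ vastly exceeds the factor $9$ that tripling affords, and would wreck the final trade-off. The paper instead kills the cross term $R_1R_2$ using the case hypotheses themselves: under $R_2 \le \tfrac12 R_1$ one has $8R_1R_2 \le 4R_1^2$ and $7R_2^2 \le \tfrac74 R_1^2$, so $(R_1+4R_2)^2 \le \tfrac{27}{4}R_1^2 + 9R_2^2$ with equality at $R_2 = \tfrac12 R_1$; the thresholds defining the cases are re-optimized for the squared objective and are \emph{not} the MSR thresholds $\tfrac38$ and $\tfrac65$. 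Finally, you omit the guessing of the $1/\epsilon$ largest optimal balls; for MSSR this is what bounds the cost of rounding the single fractional variable of the LP-Choose extreme point by $(14 R_m)^2 \le 14^2\,\epsilon \cdot OPT$, so it cannot be dropped. Your closing observation that $11.078 < 3.389^2$ is a plausibility check, not a substitute for this analysis.
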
 
Note that this is best known result for the MSSR case. 
As mentioned, objective functions with powers greater than 1 has been studied in other works. 
Although, their corresponding problems are more strict and hence, they establish a larger approximation factor. 
One can also analyze Charikar's approach to get a constant approximation.

\subsection{Other Related Work}\label{sec:priorwork}
Gibson et al. show MSR is {\bf NP}-hard even in metrics with constant doubling dimension or shortest-path metrics of edge-weighted planar graphs \cite{Gibson2009}. 
In polynomial time, the best approximation algorithm is the stated 3.504 approximation by Charikar and Panigrahy. \cite{charika04}. 
Interestingly, \cite{Gibson2009} show that MSR can be solved exactly in $n^{O(\log n \cdot \log \Gamma)}$ where $\Gamma$ is the {\em aspect ratio} of the metric (maximum distance divided by minimum nonzero distance). 
Using Randomized Algorithm, this yields a quasi-PTAS for MSR: i.e. a $(1+\epsilon)$-approximation with running time $n^{O(\log 1/\epsilon + \log^2 n)}$. 
The main idea that underlies this result is that if we probabilistically partition the metric into sets with at most half the original diameter, then with high probability only $O(\log{n})$ balls in the optimal MSR solution are ``cut'' by the partition.

We note that an improved $(3+\epsilon)$-approximation for MSR approximation has designed by Buchem et. al. \cite{buchem24} since the preliminary version of our results first appeared \cite{esa22}. Like our work, they consider an LP relaxation for MSR. Our algorithm, like the previous one by Charikar and Panigrahy \cite{charika04}, considers the Lagrangian relaxation of the constraint asserting at most $k$ centers are open and rounds a bi-point solution (two integer solutions to the Lagrangian relaxation that can be averaged to find a near-optimal LP solution fractionally opening $k$ centers) to this Lagrangian relaxation, though with more scrutiny than in \cite{charika04}.
On the other hand, \cite{buchem24} works directly with the original LP relaxation and describes a novel primal-dual approach which obtains two related clusterings, which also be viewed as a bi-point solution. Their algorithm finishes by leveraging the structure of their bi-point solution to obtain a feasible MSR solution with the improved approximation guarantee. We emphasize that their approach yields improvements for MSR and variants with outliers and lower bounds but they do not describe any improved approximations for MSD nor MSSR.

Given that a quasi-PTAS exists for MSR, a major open problem is to design a true PTAS for MSR, or perhaps to demonstrate that a PTAS for MSR is not possible under something like the Strong Exponential Time Hypothesis. For now, it is of interest to get improved constant-factor approximations for MSR. By way of analogy, the {\em unsplittable flow} problem was known to admit a quasi-PTAS \cite{Bansal2006,UFP}
However, improved constant-factor approximations were subsequently produced \cite{UFP2,UFP3,UFP1}, that is until a PTAS was finally found by Grandoni et al. \cite{UFPPTAS}.

 Doddi et al. show that unless {\bf P} = {\bf NP}, there is no $(2-\epsilon)$-approximation for MSD for any $\epsilon > 0$ even if the metric is the shortest path metric of an unweighted graph \cite{Doddi2000}. Prior to our work, the best approximation for MSD is simply twice the best polynomial-time approximation for MSR, i.e. $2 \cdot 3.504 = 7.008$ using the approximation for MSR from \cite{charika04}.

MSR and MSD have been studied in special cases as well. In constant-dimensional Euclidean metrics, MSR can be solved exactly in polynomial time \cite{GibsonEuclidean}. This is particularly interesting in light of the fact that MSR is hard in doubling metrics. For MSD in constant-dimensional Euclidean metrics, if $k$ is also regarded as a constant then MSD can be solved exactly \cite{Capoyleas90geometricclusterings}. In general metrics with $k = 2$, MSD can be solved exactly by observing that if we are given the diameters of the two clusters, we can use 2SAT to determine if we can place the points in these clusters while respecting the diameters \cite{Hansen1987}. However, MSD is {\bf NP}-hard for even $k = 3$ as it captures the problem of determining if an unweighted graph can be partitioned into 3 cliques. Finally, if one does not allow balls with radius 0 in the solution, MSR can be solved in polynomial time in shortest path metrics of unweighted graphs \cite{Behsaz,Lokshtanov}.

In the variant of MSR with outliers, we are permitted to discard up to $m$ clients before optimally clustering the rest. A 12.365-approximation was first presented by Ahmadian and Swamy \cite{ahmadian2016} and the new $(3+\epsilon)$-approximation for MSR in \cite{buchem24}
applies to the setting with outliers.

The lower-bounded setting has also been studied. That is, for each possible center $i$ there is a bound $L_i$ on the number of data points that must be assigned to that center.
A 3.82-approximation was given in \cite{ahmadian2016}, where they also obtain a 12.365-approximation for the general case with both outliers and center lower bounds.
We note this general case now has an improved $(3.5 + \epsilon)$-approximation \cite{buchem24}.


Bhowmick, Inamdar, and  Varadarajan \cite{Bhowmick2021} study a fault tolerant version.
They discuss the Metric Multi-Cover problem, in which we have a set of data points $\mathcal{X}$ and set of  candidate facilities $\mathcal{Y}$ and a demand value $k$ for all the data points.
The goal here is to pick pairs of $(i,r)$, as many as we wish but at most one per center $i$, to minimize the sum of $r^\alpha$ (for some given $\alpha$) for chosen balls while each data point has to be covered by at least $k$ balls.
This modification of the problem admits an approximation factor of $2 \cdot (108)^\alpha$.
If we add the constraint to use at most $t$ centers (called $t$-MMC), their work gives an approximation factor of $4\cdot(540)^\alpha$.
On another note, if each data point has specific demand, the factor changes to $2\cdot (144)^\alpha$.
Additionally, a QPTAS also exists for the standard Metric Multi-Cover problem based in \cite{bandyapadhyay2016}.

For $t$-MMC in Euclidean spaces, a $(23.02 + 63.95(t-1))$-approximation is known  \cite{Abu2011}. 
They also consider a non-discrete version of MMC, where there are areas to cover rather than discrete data points.
Their work proves a $63.94 + 177.64(t-1)$ approximation factor. 
For a client specific version of MMC in the Euclidean, \cite{shike2013} provides a $4(27\sqrt{2})^\alpha$-approximation. 
Also, \cite{Liu2021} studies $t$-MMC problem with a penalty added for each uncovered data point. 
Then the objective function is to minimize minimum sum of radii plus the total penalty of uncovered points. 
They give a $3^\alpha + r_{max}^{\alpha}$ approximation algorithm in Euclidean space. 

\section{Minimum Sum of Radii}
We first focus on the MSR problem and later extend the procedure to MSD and MSSR.
Recall $X$ is the set of all data points in a metric with distances given by $d$ and we are to pick a most $k$-centers and assign a radius to each such that each point in $X$ lies within the assigned radius of one of the chosen centers.

From now on we let $n := |X|$.
We may assume $d(i,i') > 0$ for distinct $i,i' \in X$ by removing all but one point from each group of colocated points.
A {\bf ball} in $X$ is a set of the form $B(i,r) = \{j \in X : d(i,j) \leq r\}$ for some point $i \in X$ and radius $r \geq 0$. 
We refer to a pair $(i,r)$ as a ball with the understanding it is referring to the set $B(i,r)$. 
We view a solution as a collection $\mathcal B$ of pairs $(i,r), i \in X, r \geq 0$ describing the centers and radii of the balls. 
For such a subset, we let $cost(\mathcal B) = \sum_{(i,r) \in \mathcal B} r$ be the total radii of these balls.

Fix some small constant $\epsilon > 0$ such that $1/\epsilon$ is an integer.
Note that a smaller $\epsilon$ leads to a better guarantee with increased (but still polynomial) running time.
We will be able to pick a small enough $\epsilon$ such that it will hide in the approximation guarantee. This is why it is not mentioned in the statement of Theorem \ref{thm:msr_main}.
We assume $k > 1/\epsilon$, otherwise we can simply use brute force to find the optimum solution in $n^{O(1/\epsilon)}$ time.

Here we provide a detailed explanation for the MSR algorithm and all its subroutines. 
The main contribution happens at the last phase of algorithm, where we establish a better approximation factor.
Our algorithm for MSR is summarized in Algorithm \ref{alg:msr} at the end of this section, though it makes reference to a fundamental subroutine to find our ``bi-point'' solution that we describe later. 
By bi-point, we simply mean two subsets of balls $\cB_1, \cB_2$ with $|\cB_1| \geq k \geq |\cB_2|$ so some averaging of these balls looks like a feasible fractional solution using exactly $k$ balls.


\subsection{Step 1: Guessing the Largest Balls}\label{sec:guess}

Let $\mathcal B^*$ denote some fixed optimum solution with $OPT := cost(\mathcal B^*)$. 
Among all optimal solutions, we assume $\mathcal B^*$ has the fewest balls. Thus, for distinct $(i,r), (i',r') \in \cB^*$ we have that $i' \notin B(i,r)$ since, otherwise, $\mathcal \cB^* = \{(i,r), (i',r')\}) \cup \{(i,r+r')\}$ is another optimal solution with even fewer balls.

Similar to \cite{charika04}, we guess the $1/\epsilon$ largest balls in $\mathcal{B^*}$ by trying each subset $\cB'$ of $1/\epsilon$ balls and proceeding with the algorithm we describe in the rest of this section.
That is, let $\mathcal{B'} \subseteq \mathcal B^{*}$ be such that $|\mathcal{ B'}| = 1/\epsilon$ and $r \leq r'$ for each $(i,r) \in \mathcal{B^{*}} - \mathcal{B'}$ and $(i',r') \in \mathcal{B'}$.

Let $R_m$ be the minimum radius of a ball in $\mathcal B'$. Remember that since we already have guessed the largest balls and the sum of their radii is at most $OPT$, then $R_m \leq \epsilon \cdot OPT$. 
Let $k' := k - 1/\epsilon$ and note $k'$ is an upper bound on the number of balls in $\cB^* - \cB'$.

We now restrict ourselves to the instance with points $X' := X - \cup_{(i,r) \in \mathcal B'} B(i,r)$ not yet covered by $\mathcal B'$.
Since no center of a ball in $\cB^*$ is contained within another ball from $\cB^*$, the remaining balls in $\cB^* - \cB'$ are also centered in $X'$. 
We will let $OPT' = OPT - \sum_{(i,r) \in \mathcal B'} r$ denote the optimal solution value to this restricted instance. 
The solution $\mathcal B^* - \mathcal B'$ for this instance satisfies $r \leq R_m \leq \epsilon \cdot OPT$ for any $(i,r) \in \mathcal B^* - \mathcal B'$. We also assume $|X'| > k'$, otherwise we just open zero-radius balls at each point in $\mathcal X'$.

Our guessing step must perform a ``precheck'' for this guess as follows before proceeding.
We use a $2$-approximation for the classic \textsc{$k'$-center} problem \cite{dorit1985} on the metric restricted to $X'$.
If the solution returned has radius $> 2 \cdot R_m$, then we reject this guess $\cB'$. This is valid because we know for a correct guess that the remaining points can each be covered using at most $k'$ balls each with radius at most $R_m$.
From now on, we let $\mathcal A$ denote the $k'$ centers returned by this approximation: so each $j \in X'$ lies in at least one ball of the form $B(i, 2 \cdot R_m)$ for some $i \in \mathcal A$.


The preceding discussion is summarized below.
\begin{theorem}\label{thm:summary}
There is an optimal solution set $\mathcal{B}^*$ with value $OPT$ such that the following holds. Let $\mathcal{B}^{'}$ be the set of $\frac{1}{\epsilon}$ largest balls in $\mathcal{B}^{*}$.
and $R_{m}$ is the minimum radius in $\mathcal{B}'$.

Then $R_{m} \leq \epsilon \cdot OPT$.
Furthermore, let $X' = X - \cup_{(i,r) \in \mathcal{B}^{'}} B(i,r)$, the set of data points not covered by $\mathcal{B'}$.
Finally, if $\mathcal{A}$ is a set of centers obtained by running a $2$-approximation for the $(k-1/\epsilon)$-center problem in the metric restricted to $X'$, then $d(j,\mathcal A) \leq 2 \cdot R_m$ for each $j \in X'$.

\end{theorem}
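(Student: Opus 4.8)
The plan is to assemble the statement from the structural facts set up in the preceding discussion, treating it as a consolidation of three separate claims about the optimal solution $\mathcal{B}^*$ and the guessing step. First I would fix the optimal solution: among all optimal solutions take one minimizing the number of balls, and argue it has the \emph{separated-centers} property, namely that no center of a ball in $\mathcal{B}^*$ lies inside another ball of $\mathcal{B}^*$. If some center $i'$ of $(i',r') \in \mathcal{B}^*$ satisfied $d(i,i') \le r$ for a distinct $(i,r) \in \mathcal{B}^*$, then by the triangle inequality $B(i',r') \subseteq B(i,r+r')$, so replacing both balls by the single ball $(i,r+r')$ yields a feasible solution of the same cost $r+r'$ but with one fewer ball, contradicting minimality.

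Second, I would derive $R_m \le \epsilon \cdot OPT$ by a simple averaging argument. Since $\mathcal{B}'$ consists of the $1/\epsilon$ balls of largest radius and $R_m$ is the smallest radius among them, every ball in $\mathcal{B}'$ has radius at least $R_m$, so $\tfrac{1}{\epsilon} \cdot R_m \le \sum_{(i,r)\in\mathcal{B}'} r \le cost(\mathcal{B}^*) = OPT$, which rearranges to the claimed bound.

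Third, and this is the step needing the most care, I would show that the $2$-approximation for $(k-1/\epsilon)$-center on $X'$ returns a set $\mathcal{A}$ with $d(j,\mathcal{A}) \le 2 R_m$ for all $j \in X'$. The crux is to exhibit a feasible $k'$-center solution on $X'$ of radius at most $R_m$, so that the approximation guarantee delivers $2R_m$. I would take the centers of the remaining balls $\mathcal{B}^* - \mathcal{B}'$ as this solution, and three facts must line up: there are at most $k' = k - 1/\epsilon$ of them (since $|\mathcal{B}^*| \le k$ and $|\mathcal{B}'| = 1/\epsilon$); each has radius at most $R_m$ (being the non-largest balls); and, crucially, each such center lies in $X'$ while together they cover all of $X'$. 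The containment of the centers in $X'$ is exactly where the separated-centers property from the first step is invoked: no center of $\mathcal{B}^* - \mathcal{B}'$ can lie in a ball of $\mathcal{B}'$, so none is removed when passing to $X'$. Since $\mathcal{B}^*$ covers $X$, the balls $\mathcal{B}^* - \mathcal{B}'$ cover every point of $X'$, each within radius $R_m$ of one of these centers, so the optimal $k'$-center radius on $X'$ is at most $R_m$ and the conclusion follows.

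I expect the only genuinely delicate point to be this reliance on the separated-centers property to guarantee that the remaining centers survive the restriction to $X'$ and hence certify a cheap $k'$-center solution; the remainder is bookkeeping on cardinalities and radii together with one averaging inequality, all of which follow directly from the feasibility of $\mathcal{B}^*$ and the definition of the guessing procedure.
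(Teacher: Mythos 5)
Your proposal is correct and follows essentially the same route as the paper: take a minimum-cardinality optimal solution to get the separated-centers property via the ball-merging argument, use averaging over the $1/\epsilon$ largest balls for $R_m \leq \epsilon \cdot OPT$, and certify the $2R_m$ guarantee by observing that the centers of $\mathcal{B}^* - \mathcal{B}'$ survive into $X'$ and form a feasible $k'$-center solution of radius at most $R_m$. The one point you flag as delicate -- that separated centers ensure the remaining centers lie in $X'$ -- is exactly the observation the paper relies on as well.
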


When analyzing the rest of the algorithm, we will assume $\cB'$ is guessed correctly, i.e. $\cB' \subseteq \cB^*$ and all $(i,r) \in \cB^* - \cB'$ have $r \leq R_m$. 
Our final solution will be the minimum-cost solution found over all guesses $\cB'$ that were not rejected, so the final solution's cost will be at most the cost of the solution found when $\cB'$ was guessed correctly.


\subsection{Step 2: Getting a Bi-point Solution}
The output from this step is similar to \cite{charika04}.
We remark that their approach would suffice for our purposes, except there would be yet another ``$\epsilon$'' introduced in the approximation guarantee with their technique.
We will be following a different approach primarily to show there is a simple and direct LP rounding routine and, more importantly, to give a generic procedure that is likely to apply to most LP-rounding LMP approximations to get a bi-point solution where both points can be compared with the optimal LP solution for a single value $\lambda$ for the Lagrangian variable.
 
First, consider the following LP relaxation for MSR where we have a variable $x_{(i,r)}$ for each of the $O(n^2)$ possible balls, that is for each $i \in X'$ and $0 \leq r \leq 2 \cdot R_m$ such that $r = d(i,j)$ for some $j \in X'$ (since we only need to consider balls of radius equal to the maximum distance of a client covered by the ball). The factor of $2$ in the maximum radius $2 \cdot R_m$ is because we want to allow balls in the $k$-center solution mentioned in Theorem \ref{thm:summary}, this will be important when we round the Lagrangian relaxation of the LP below.
\[
\begin{array}{rrcll}
{\bf min} &\sum_{(i,r)} r \cdot x_{(i,r)} & \tag{\bf LP} \label{lp:primal-bf}\\
{\bf s.t.} & \sum_{(i,r) : j \in B(i,r)} x_{(i,r)} & \geq & 1 & \forall~ j \in X' \\
& \sum x_{(i,r)}  & \leq & k \\
& x & \geq & 0
\end{array}
\]

For a value $\lambda \geq 0$, \ref{lp:primal} is the linear program that results by considering the Lagrangian relaxation of MSR through ``Lagrangifying'' the cardinality constraint on the number of open centers. 
That is, the LP has variables for each possible ball we may add except instead of restricting the number of balls to be at most $k'$, we simply pay $\lambda$ for each ball.
\[
\begin{array}{rrcll}
{\bf min} &\sum_{(i,r)} (r+\lambda) \cdot x_{(i,r)} & \tag{\bf LP($\lambda$)} \label{lp:primal}\\
{\bf s.t.} & \sum_{(i,r) : j \in B(i,r)} x_{(i,r)} & \geq & 1 & \forall~ j \in X' \\
& x & \geq & 0
\end{array}
\]

We also consider the dual of \ref{lp:primal}.
\[
\begin{array}{rrcll}
{\bf max} & \sum_{j \in X'} y_j \tag{\bf DUAL($\lambda$)} \label{lp:dual} \\
{\bf s.t.} & \sum_{j \in B(i,r) \cap X'} y_j & \leq & r + \lambda & \forall~ (i,r), r \leq R_m \\
& y & \geq & 0
\end{array}
\]

The following is standard.
\begin{lemma}\label{lem:lmp}
For any $\lambda \geq 0$, let $OPT_{\text{\ref{lp:primal}}}$ denote the optimum value of \ref{lp:primal}. Then $OPT_{LP(\lambda)} - \lambda \cdot k' \leq OPT'$.
\end{lemma}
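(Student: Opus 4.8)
The plan is to exhibit an explicit feasible point for \ref{lp:primal} whose objective value is at most $OPT' + \lambda \cdot k'$; the claimed inequality then follows immediately because $OPT_{LP(\lambda)}$ is the \emph{minimum} over all feasible points. The natural candidate is the restricted integral optimum $\cB^* - \cB'$ itself, reinterpreted as a $0/1$ assignment to the LP variables. This is the standard observation that Lagrangifying a ``$\le k'$'' constraint with multiplier $\lambda \ge 0$ and subtracting $\lambda k'$ produces a lower bound on the constrained optimum, so I expect the argument to be short.

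First I would verify that $\cB^* - \cB'$ corresponds to a legitimate setting of the variables of \ref{lp:primal}. By the discussion preceding Theorem \ref{thm:summary}, every ball in $\cB^* - \cB'$ is centered at a point of $X'$ and has radius at most $R_m \leq 2 \cdot R_m$, so each such ball lies within the radius range allowed in \ref{lp:primal}. We may also assume without loss of generality that each ball's radius equals the distance from its center to the farthest point it covers (shrinking a radius never loses coverage and never raises cost), so every $(i,r) \in \cB^* - \cB'$ is indeed one of the $O(n^2)$ balls indexing a variable $x_{(i,r)}$. I would then set $x_{(i,r)} = 1$ for each $(i,r) \in \cB^* - \cB'$ and all remaining variables to $0$. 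This satisfies the coverage constraints of \ref{lp:primal}: each $j \in X'$ is, by definition of $X'$, a point left uncovered by $\cB'$, and since $\cB^*$ covers all of $X$ it must be covered by some ball of $\cB^* - \cB'$, making the corresponding constraint sum at least $1$.

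Next I would evaluate the objective of \ref{lp:primal} at this point, obtaining $\sum_{(i,r) \in \cB^* - \cB'} (r + \lambda) = OPT' + \lambda \cdot |\cB^* - \cB'|$, where I use that the radii of $\cB^* - \cB'$ sum to $OPT'$ by definition of $OPT'$. Since $k'$ upper bounds $|\cB^* - \cB'|$ and $\lambda \geq 0$, this is at most $OPT' + \lambda \cdot k'$, and as \ref{lp:primal} is a minimization we conclude $OPT_{LP(\lambda)} \leq OPT' + \lambda \cdot k'$, which rearranges to the statement. The only point needing any care is the feasibility check in the previous paragraph — confirming the restricted optimum's balls genuinely appear among the LP variables; everything else is arithmetic. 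Alternatively, one could route the argument through the intermediate LP carrying the explicit $\sum x_{(i,r)} \le k'$ constraint, whose optimum lower-bounds $OPT'$, and note that for any $\lambda \ge 0$ every such constrained-feasible $x$ satisfies $\lambda\bigl(\sum x_{(i,r)} - k'\bigr) \le 0$, so $OPT_{LP(\lambda)} - \lambda k'$ lies below that optimum.
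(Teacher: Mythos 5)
Your proposal is correct and is essentially the paper's own proof: both exhibit the natural $\{0,1\}$-solution induced by $\cB^* - \cB'$ as a feasible point of \ref{lp:primal} with objective value at most $OPT' + \lambda \cdot k'$, and conclude by minimality of $OPT_{LP(\lambda)}$. The only difference is that you spell out the feasibility details (centers lie in $X'$, radii may be shrunk to lie in the LP's variable set, coverage of $X'$) that the paper leaves implicit in the phrase ``natural $\{0,1\}$-solution.''
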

\begin{proof}
Using the natural $\{0,1\}$-solution corresponding to the optimal MSR solution (which uses at most $k'$ balls), we see there is a feasible solution for \eqref{lp:primal} with cost at most $OPT' + k' \cdot \lambda$.
That is, $OPT_{LP(\lambda)} - \lambda \cdot k' \leq OPT'$.
%
\end{proof}

\begin{theorem}\label{thm:round}
Let $\lambda \geq 0$ and let $x'$ be an optional solution to \eqref{lp:primal}. There is a polynomial-time algorithm ROUND($x'$) that returns a set of balls $\cB$ such that $x'_{i,r} > 0$ for each $B(i,r) \in \cB$.
Furthermore, the balls in $\cB$ are pairwise-disjoint and if we triple the radius of each ball in $\cB$ then all data points would be covered.
\end{theorem}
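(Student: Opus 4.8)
The plan is to observe that, among the three asserted properties, none actually depends on $x'$ being \emph{optimal} or on the value of $\lambda$: I will only use that $x'$ is feasible for the covering constraints of \eqref{lp:primal}, together with the triangle inequality. First I would discard every ball outside the support, working only with the collection $S = \{(i,r) : x'_{(i,r)} > 0\}$. Any subset of $S$ automatically satisfies the first requirement that $x'_{(i,r)} > 0$ for every chosen ball, so the remaining task is to extract from $S$ a pairwise-disjoint subcollection whose threefold dilation still covers $X'$.

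For this I would run a simple greedy selection. Sort the balls of $S$ in order of non-increasing radius and process them one at a time, adding a ball $B(i,r)$ to \cB precisely when it is disjoint from every ball already placed in \cB. By construction \cB is pairwise disjoint, which gives the second property, and the procedure clearly runs in polynomial time since $|S| = O(n^2)$ and each disjointness test costs $O(n)$.

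The heart of the argument is the geometric lemma behind the factor $3$: if two balls $B(i,r)$ and $B(i',r')$ share a point and $r' \ge r$, then $B(i,r) \subseteq B(i', 3r')$. Indeed, letting $p$ be a common point, $d(i',i) \le d(i',p) + d(p,i) \le r' + r$, so for any $q \in B(i,r)$ we get $d(i',q) \le d(i',i) + d(i,q) \le (r'+r) + r = r' + 2r \le 3r'$, using $r \le r'$ in the last step. With this in hand the coverage argument is immediate: any $j \in X'$ is fractionally covered, hence lies in some $B(i,r) \in S$. If $B(i,r)$ was selected then $j$ lies in $B(i,3r)$; otherwise $B(i,r)$ was discarded because it meets some earlier-selected ball $B(i',r')$, and because we process in non-increasing radius order this conflicting ball satisfies $r' \ge r$, so the lemma gives $j \in B(i,r) \subseteq B(i',3r')$. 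Either way $j$ is covered after tripling the radii of the balls in \cB.

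The step that needs the most care is ensuring the processing order really does force the surviving conflicting ball to be at least as large as the discarded one — this is exactly what pins the dilation constant at $3$ rather than something worse — and checking that the triangle-inequality estimate invokes $r' \ge r$ in the right place. Everything else is routine bookkeeping, and in particular no property of $\lambda$ or of optimality enters the proof of these structural guarantees.
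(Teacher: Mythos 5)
Your proposal is correct and matches the paper's own proof essentially step for step: the same greedy selection of support balls in non-increasing radius order with a disjointness test, and the same triangle-inequality argument showing that an unselected support ball covering a point $j$ must intersect a previously selected ball of at least its radius, whence tripling that ball's radius covers $j$. Your packaging of the key step as a containment lemma ($B(i,r) \subseteq B(i',3r')$ when the balls intersect and $r' \geq r$) is only a cosmetic reorganization of the paper's inline computation $d(j,i) \leq r' + r' + r \leq 3r$.
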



The proof of Theorem \ref{thm:round} will be presented in Section \ref{sec:lmp}. 
For now, we note how the solution returned is by ROUND$(x'$) is cheap when applied to an optimal LP solution $x'$.


\begin{corollary}\label{cor:round}
Let $\lambda \geq 0$ and let $\overline{\mathcal B} = \text{ROUND}(x')$ where $x'$ is an optimal solution to \ref{lp:primal}. Then $cost(\overline{\mathcal B}) + \lambda \cdot |\overline{\mathcal B}| \leq OPT_{\text{\ref{lp:primal}}}$.
\end{corollary}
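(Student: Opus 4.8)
The plan is to prove this by combining the structural guarantees of ROUND from Theorem~\ref{thm:round} with linear programming duality, specifically complementary slackness against an optimal dual solution to \eqref{lp:dual}. The quantity we must bound rewrites cleanly: since $cost(\overline{\mathcal B}) = \sum_{(i,r) \in \overline{\mathcal B}} r$ and $|\overline{\mathcal B}|$ merely counts the balls, we have
\[
cost(\overline{\mathcal B}) + \lambda \cdot |\overline{\mathcal B}| = \sum_{(i,r) \in \overline{\mathcal B}} (r + \lambda),
\]
so the goal reduces to showing $\sum_{(i,r) \in \overline{\mathcal B}} (r+\lambda) \leq OPT_{LP(\lambda)}$.

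First I would fix an optimal dual solution $y^*$ to \eqref{lp:dual}. By strong LP duality, $\sum_{j \in X'} y^*_j$ equals the optimum of \eqref{lp:dual}, which in turn equals $OPT_{LP(\lambda)}$; this identifies the right-hand side with a sum of dual variables, exactly the form we can compare against on a ball-by-ball basis. Next I would invoke complementary slackness. Theorem~\ref{thm:round} guarantees that $x'_{(i,r)} > 0$ for every $B(i,r) \in \overline{\mathcal B}$, and $x'$ is optimal for \eqref{lp:primal}; hence each such ball corresponds to a dual constraint that is tight, i.e. $\sum_{j \in B(i,r) \cap X'} y^*_j = r + \lambda$. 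Summing this identity over the balls in $\overline{\mathcal B}$ yields
\[
\sum_{(i,r) \in \overline{\mathcal B}} (r + \lambda) = \sum_{(i,r) \in \overline{\mathcal B}} \ \sum_{j \in B(i,r) \cap X'} y^*_j .
\]
The pairwise-disjointness guarantee of Theorem~\ref{thm:round} ensures each point $j \in X'$ lies in at most one ball of $\overline{\mathcal B}$, so the double sum on the right counts each $y^*_j$ at most once and is therefore bounded by $\sum_{j \in X'} y^*_j = OPT_{LP(\lambda)}$, completing the argument. I note that the tripling-coverage property of Theorem~\ref{thm:round} plays no role here; it is needed only for the feasibility arguments later.

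The main obstacle --- really the only subtlety --- is ensuring that complementary slackness is legitimately available: this requires that $x'$ and $y^*$ be simultaneously optimal for the primal and dual of \eqref{lp:primal}, and that the selected balls carry strictly positive primal mass, which is precisely the property $x'_{(i,r)} > 0$ supplied by Theorem~\ref{thm:round}. Once these are in place, disjointness does the rest by preventing any double-counting of dual mass across the selected balls, and strong duality converts the resulting bound on $\sum_{j \in X'} y^*_j$ into the claimed bound by $OPT_{LP(\lambda)}$.
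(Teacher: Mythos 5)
Your proof is correct and follows essentially the same route as the paper's: complementary slackness (using $x'_{(i,r)} > 0$ for each selected ball) gives tightness $r + \lambda = \sum_{j \in B(i,r)} y^*_j$, and the pairwise disjointness of the balls together with $y^* \geq 0$ and strong duality bounds the sum by $OPT_{\text{LP}(\lambda)}$. Your write-up is simply a more explicit version of the paper's two-line argument, so there is nothing to change.
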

\begin{proof}
Let $y'$ be an optimal dual solution to \eqref{lp:primal}.
By complementary slackness, for each $B(i,r) \in \overline{\mathcal B}$ we have $r + \lambda = \sum_{j \in B(i,r)} y'_j$. Since the balls are disjoint, then $\sum_{B(i,r) \in \overline{\mathcal B}} \sum_{j \in B(i,r)} y'_j \leq \sum_{j' \in X'} y'_j = OPT_{\text{\ref{lp:primal}}}$.
%
\end{proof}


The next theorem, also proven in Section \ref{sec:lmp}, summarizes our LMP approximation result that is obtained by using a ``binary search'' over $\lambda$ and using the algorithm from Theorem \ref{thm:round} to determine if $\lambda$ was too small or too big.
The following theorem is required for the analysis later studied.
\begin{theorem}\label{thm:bipoint}
There is a polynomial-time algorithm that will compute a single value $\lambda \geq 0$ and two sets of balls $\cB_1, \cB_2$ having respective sizes $k_1, k_2$ where $k_1 \geq k' \geq k_2$.
Furthermore, for every $(i,r) \in \cB_1$, there is some $(i',r') \in \cB_2$ such that $B(i,r) \cap B(i',r') \neq \emptyset$. Finally, for both $\ell = 1$ and $\ell = 2$ we have the following properties:
\begin{itemize}
\item for each $(i,r) \in \cB_\ell$, we have $r \leq 2 \cdot R_m$ ($R_{m}$ is the smallest radius in the set of guessed balls),
\item tripling the radii of each $(i,r) \in \cB_\ell$ will cover $X'$, i.e. for each $j \in X'$ there is some $(i,r) \in \cB_\ell$ such that $j \in B(i,3 \cdot r)$, and
\item $cost(\cB_\ell) + \lambda \cdot k_\ell \leq OPT_{\text{\ref{lp:primal}}}$
\end{itemize}
\end{theorem}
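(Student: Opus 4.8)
The plan is to use the parametric structure of \ref{lp:primal} as a function of $\lambda$ together with the rounding routine ROUND of Theorem \ref{thm:round}. First I would observe that $OPT_{\text{\ref{lp:primal}}}$, viewed as a function of $\lambda$, is the minimum over a fixed feasible polytope of the linear-in-$\lambda$ objective $\sum_{(i,r)}(r+\lambda)x_{(i,r)}$, hence it is concave and piecewise-linear in $\lambda$; its slope at $\lambda$ equals the total fractional number of balls $\sum_{(i,r)} x_{(i,r)}$ used by an optimal solution, which is therefore non-increasing in $\lambda$. I would pin down the two extremes: at $\lambda = 0$ the zero-radius balls $B(j,0)$ are free, so $x_{(j,0)}=1$ for all $j\in X'$ is optimal and ROUND must return all $|X'| > k'$ of these singletons (each must be kept to be covered even after tripling), giving at least $k'$ balls; for $\lambda$ larger than the total achievable radius, Corollary \ref{cor:round} forces $|\overline{\cB}| \le OPT_{\text{\ref{lp:primal}}}/\lambda$, which is at most $k'$ since $X'$ can be covered by the $k'$ balls of radius $2R_m$ from Theorem \ref{thm:summary}.

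Next I would do a binary search on $\lambda$ to locate the single critical value $\lambda^\ast$ at which the number of balls produced by rounding an optimal LP solution crosses $k'$. The key point is that the support of the optimal solution (and hence the family ROUND selects from it) is constant on each open interval between consecutive breakpoints of the concave function $OPT_{\text{\ref{lp:primal}}}$, so the rounded ball-count is a non-increasing step function of $\lambda$. Because every breakpoint is a ratio of subdeterminants of the constraint matrix, the breakpoints are rationals whose bit-length is polynomial in the input, so once the search interval is shorter than the minimum gap between distinct breakpoints it contains exactly one breakpoint $\lambda^\ast$, which I can then recover exactly. At $\lambda^\ast$ the optimal face is shared by the two adjacent bases: one is optimal as $\lambda\to(\lambda^\ast)^-$ and yields an optimal solution $x_1$ with $\ge k'$ balls, the other is optimal as $\lambda\to(\lambda^\ast)^+$ and yields an optimal solution $x_2$ with $\le k'$ balls, yet both $x_1$ and $x_2$ are optimal for \ref{lp:primal} at the single value $\lambda=\lambda^\ast$. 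Setting $\cB_1 = \text{ROUND}(x_1)$ and $\cB_2 = \text{ROUND}(x_2)$ gives $k_1 = |\cB_1| \ge k' \ge |\cB_2| = k_2$.

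With $\cB_1, \cB_2$ fixed, three of the four bullet properties are essentially free. The bound $r \le 2R_m$ holds because every variable of \ref{lp:primal} corresponds to a ball of radius at most $2R_m$ by construction; the covering-after-tripling property for each $\cB_\ell$ is exactly the conclusion of Theorem \ref{thm:round}; and the Lagrangian cost bound $cost(\cB_\ell)+\lambda^\ast k_\ell \le OPT_{\text{\ref{lp:primal}}}$ is Corollary \ref{cor:round} applied to the optimal solution $x_\ell$ at $\lambda^\ast$ (this is where it matters that both $x_1$ and $x_2$ are optimal for the \emph{same} $\lambda^\ast$, so a single common optimal dual $y^\ast$ certifies both).

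The hard part will be the intersection property: every $(i,r)\in\cB_1$ must meet some $(i',r')\in\cB_2$ using the untripled radii. I would attack this through the common optimal dual $y^\ast$ at $\lambda^\ast$. Since $(i,r)\in\cB_1$ has positive value in $x_1$, complementary slackness makes its dual constraint tight, $\sum_{j\in B(i,r)} y^\ast_j = r+\lambda^\ast$; assuming $\lambda^\ast>0$ this forces some $j\in B(i,r)$ with $y^\ast_j>0$, and then complementary slackness for $x_2$ forces the covering constraint of $j$ to be tight, so some ball in the support of $x_2$ containing $j$ is available to ROUND. To finish I would rely on ROUND selecting a \emph{maximal} disjoint subfamily of that support, so that this support ball either lies in $\cB_2$ (giving $j\in B(i,r)\cap B(i',r')$ directly) or meets a ball of $\cB_2$; pushing the second case down to a genuinely untripled intersection is the delicate point and is where I expect to lean most heavily on the precise guarantees of ROUND, possibly by coupling the two roundings so that the disjoint family $\cB_1$ refines $\cB_2$. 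I would treat this coupling as the technical heart of the argument, with the parametric-LP machinery above serving only to deliver the single $\lambda^\ast$ and the straddling sizes.
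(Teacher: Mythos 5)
Your parametric/binary-search machinery is essentially the paper's own: it likewise anchors the search with the extreme values of $\lambda$ (Lemma \ref{lem:initial}), shrinks the window below the minimum gap between break points, which it bounds via Cramer's rule and Hadamard's inequality (Lemma \ref{lem:gap}), recovers the unique break point in the window exactly (Lemma \ref{lem:largest}), and applies Corollary \ref{cor:round} to two solutions that are both optimal at the single common $\lambda$ to obtain the radius bound, the covering-after-tripling property, and the Lagrangian cost bound. One side remark: your claim that the rounded ball-count is a non-increasing step function of $\lambda$ is neither justified nor needed; the paper's search only maintains the invariant $|\cB_1| \geq k' \geq |\cB_2|$ for the current window endpoints, which is all that is required.

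The genuine gap is exactly where you flagged it: the untripled intersection property. Your complementary-slackness argument gets as far as: for $(i,r) \in \cB_1$ (assuming $\lambda^* > 0$) there is some $j \in B(i,r)$ with $y^*_j > 0$, hence some ball $(i',r')$ in the support of $x_2$ contains $j$. But ROUND may have discarded $(i',r')$ precisely because it meets an earlier-selected ball $(i'',r'') \in \cB_2$ with $r'' \geq r'$; then $B(i,r)$ meets $B(i',r')$ and $B(i',r')$ meets $B(i'',r'')$, yet $B(i,r)$ and $B(i'',r'')$ can be disjoint, so no ball of $\cB_2$ need intersect $B(i,r)$ at all. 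No coupling of the two ROUND executions obviously repairs this, since $x_1$ and $x_2$ have different supports and their maximal disjoint subfamilies need not refine one another. The paper does not obtain this property from slackness; it enforces it algorithmically with the FILL procedure (Algorithm \ref{alg:fill}, Lemma \ref{lem:move}): any ball of $\cB_1$ that is disjoint from every ball of $\cB_2$ is simply moved into $\cB_2$, stopping if $|\cB_2|$ reaches $k'$, in which case one sets $\cB_1 := \cB_2$ and the property becomes trivial. What makes FILL sound is that each moved ball lies in the support of $x_1$, which is optimal for the same \ref{lp:primal} as $x_2$, so the common optimal dual pays $r+\lambda$ for it exactly; and since a ball is moved only when it is disjoint from all current balls of $\cB_2$, these dual payments come from pairwise disjoint point sets, preserving $cost(\cB_2) + \lambda \cdot |\cB_2| \leq \sum_{j \in X'} y'_j = OPT_{\text{\ref{lp:primal}}}$. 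Your proof needs this step, or an equivalent one, to be complete; as written, the "technical heart" you deferred is precisely the part that fails.
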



As a warm-up, we consider the case when one of $k_{1}$ or $k_{2}$ is equal to $k^{'}$.
\begin{lemma}\label{lem:simple}
If for $\ell = 1$ or $\ell = 2$ we have $|\cB_\ell| = k'$, then tripling the radii of the balls in $\cB_\ell$ yields a feasible solution with cost at most $3 \cdot OPT'$ for MSR.
\end{lemma}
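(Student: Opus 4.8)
The plan is to directly invoke the third bullet of Theorem~\ref{thm:bipoint} together with Lemma~\ref{lem:lmp}, specialized to the case $|\cB_\ell| = k'$. First I would observe that the hypothesis $|\cB_\ell| = k'$ means $k_\ell = k'$ for the relevant choice of $\ell$, so the cost guarantee
\[
cost(\cB_\ell) + \lambda \cdot k_\ell \leq OPT_{\text{\ref{lp:primal}}}
\]
becomes $cost(\cB_\ell) + \lambda \cdot k' \leq OPT_{\text{\ref{lp:primal}}}$. Next I would apply Lemma~\ref{lem:lmp}, which states $OPT_{LP(\lambda)} - \lambda \cdot k' \leq OPT'$, i.e. $OPT_{\text{\ref{lp:primal}}} \leq OPT' + \lambda \cdot k'$. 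Chaining these gives $cost(\cB_\ell) + \lambda \cdot k' \leq OPT' + \lambda \cdot k'$, and the $\lambda \cdot k'$ terms cancel to yield the clean bound $cost(\cB_\ell) \leq OPT'$.

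From here the argument is essentially bookkeeping on the radii. By the second bullet of Theorem~\ref{thm:bipoint}, tripling the radii of the balls in $\cB_\ell$ covers all of $X'$: for each $j \in X'$ there is some $(i,r) \in \cB_\ell$ with $j \in B(i, 3r)$. So the tripled collection $\{(i, 3r) : (i,r) \in \cB_\ell\}$ is a \emph{feasible} solution for the restricted MSR instance on $X'$ using exactly $k'$ balls. Its cost is $\sum_{(i,r) \in \cB_\ell} 3r = 3 \cdot cost(\cB_\ell) \leq 3 \cdot OPT'$, which is exactly the claimed bound.

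The one point requiring a little care is making sure the statement being proved matches the correct instance: Lemma~\ref{lem:simple} speaks of ``cost at most $3 \cdot OPT'$ for MSR,'' and I would want this to be the restricted instance on $X'$ with budget $k'$, consistent with how $OPT'$ and $k'$ were defined in Step~1. The feasibility is immediate because we have exactly $k'$ balls covering $X'$ after tripling, and the cost bound follows from the cancellation above. I do not anticipate a genuine obstacle here; the lemma is deliberately a warm-up whose entire content is the cancellation of the Lagrangian penalty $\lambda \cdot k'$ against the slack in Lemma~\ref{lem:lmp}. The only subtlety worth flagging explicitly is that this cancellation is precisely what fails when $k_\ell \neq k'$, which is exactly why the general case (where $\cB_1, \cB_2$ must be averaged) needs the more elaborate bi-point rounding developed in the rest of the section.
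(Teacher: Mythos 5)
Your proof is correct and takes essentially the same approach as the paper: both cancel the Lagrangian term $\lambda \cdot k'$ against the bound of Lemma~\ref{lem:lmp} to get $cost(\cB_\ell) \leq OPT'$, then invoke the tripling/coverage property to obtain feasibility and the factor of $3$. The only cosmetic difference is that you cite the bullets of Theorem~\ref{thm:bipoint} where the paper cites Corollary~\ref{cor:round} and Theorem~\ref{thm:round} directly; these encode the same facts.
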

\begin{proof}
Let $x_\ell$ be such that $\text{ROUND}(x)$ produced $\cB_\ell$.
Then
\[ cost(\cB_\ell) \leq OPT_{\text{\ref{lp:primal}}} -  \lambda \cdot |\cB_\ell| = OPT_{\text{\ref{lp:primal}}} -  \lambda \cdot k'   \leq OPT' \]
The first bound is by Corollary \ref{cor:round} and the second bound is by Lemma \ref{lem:lmp}.
By Theorem \ref{thm:round}, tripling the radii of balls in $\cB_\ell$ will cover all points. These tripled balls then have cost $3 \cdot cost(\cB_\ell) \leq 3 \cdot OPT'$.
\end{proof}

Our final algorithm in the next section works in the general case where $k_1 \geq k' \geq k_2$ and does not distinguish the cases $k_1 = k'$ and/or $k_2 = k'$. Lemma \ref{lem:simple} was simply a special case we considered to build intuition.


\subsection{Step 3: Combining Bi-point Solutions for MSR}\label{sec:combining}
Let $\lambda, \cB_1, \cB_2$ be the {\em bi-point solution} from Theorem \ref{thm:bipoint}. For brevity, let $C_1 = cost(\cB_1)$ and $C_2 = cost(\cB_2)$.
Since $k_1 \geq k' \geq k_2$, there are values $a,b \geq 0$ with $a+b = 1$ and $a \cdot k_1 + b \cdot k_2 = k'$. We fix these values throughout this section.

The following shows the {\em average} cost $C_1$ and $C_2$ is bounded by $OPT'$, the first inequality is by the last property listed in Theorem \ref{thm:bipoint} and the second by Lemma \ref{lem:lmp}.
\begin{equation}\label{eqn:bound}
a \cdot C_1 + b \cdot C_2 \leq a \cdot (OPT_{\text{\ref{lp:primal}}} - \lambda \cdot k_1) + b \cdot (OPT_{\text{\ref{lp:primal}}} - \lambda \cdot k_2) = OPT_{\text{\ref{lp:primal}}} - \lambda \cdot k' \leq OPT'
\end{equation}
The rest of our algorithm and analysis considers how to convert the two solutions $\cB_1, \cB_2$ to produce a feasible solution whose value is within a constant-factor of this averaging of $C_1, C_2$.
First, note tripling the radii in all balls in $\cB_2$ will produce a feasible solution as $k_2 \leq k'$, but it may be too expensive. So we will consider two different solutions and take the better of the two. The first is solution is what we just described: formally it is $\{(i,3r) : (i,r) \in \cB_2\}$, which is a feasible solution with cost $3 \cdot C_2$.

Constructing the second solution is our main deviation from the work in \cite{charika04}. Intuitively, we want to cover all points by using balls $(i,3\cdot r)$ for $(i,r) \in \cB_1$. The cheaper of this and the first solution can easily be show to have cost at most $3 \cdot OPT'$. The problem is that this could open more than $k'$ centers (if $k_1 > k'$). As in \cite{charika04}, we consolidate some of these balls into a single group based on their common intersection with some $(i', r') \in \cB_2$. We will select some groups and merge their balls into a single ball so the number of balls is at most $k'$. Our improved approximation is enabled by considering different ways to cover balls in a group using a single ball, \cite{charika04} only considered one possible way to cover a group with a single ball.

We now form groups. For each $(i,r) \in \cB_2$, we create a group $G_{(i,r)} \subseteq \cB_1$ as follows: for each $(i',r') \in \cB_1$, consider any single $(i,r) \in \cB_2$ such that $B(i,r) \cap B(i',r') \neq \emptyset$ and add $(i',r')$ to $G_{(i,r)}$. If multiple $(i,r) \in \cB_2$ satisfy this criteria, pick one arbitrarily. Let $\mathcal G = \{G_{(i,r)} : (i,r) \in \cB_2 \text{ s.t. } G_{(i,r)} \neq \emptyset\}$ be the collection of all nonempty groups formed this way, note $\mathcal G$ is a partitioning of $\cB_1$.

\subsubsection*{Covering a group with a single ball}
From here, the approach in \cite{charika04} would describe how to merge the balls in a group $G_{(i,r)} \in \cG$ simply by centering a new ball at $i$, and making its radius sufficiently large to cover all points covered by the tripled balls $B(i',3r')$ for $(i',r') \in G_{(i,r)}$. We consider choosing a different center when we consolidate the $\cB_1$ balls in a group. In fact, it suffices to simply pick the minimum-radius ball that covers the union of the tripled balls in a group. This ball can be centered at any point in $X'$. 
\begin{theorem}
If one decide to replace a group $G_{(i,r)}$ by a single ball, the cost of the ball is at most $\frac{11}{8}\cdot r + 3 \cdot cost(G_{(i,r)})$.
\end{theorem}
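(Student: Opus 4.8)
The plan is to exhibit one explicit ball that covers $\bigcup_{(i',r') \in G_{(i,r)}} B(i',3r')$ and to bound its radius; since the minimum-radius covering ball can only be cheaper, this bounds the consolidation cost claimed. Write $G = G_{(i,r)}$, let $s = \max_{(i',r') \in G} r'$ be the largest radius in the group (attained by a ball centered at some $i^* \in X'$), and let $T = cost(G) = \sum_{(i',r') \in G} r'$. The one structural fact I will use repeatedly is that every $(i',r') \in G$ satisfies $B(i,r) \cap B(i',r') \neq \emptyset$ by construction, hence $d(i,i') \le r + r'$; routing any two members $(i',r'),(i'',r'') \in G$ through $i$ then gives $d(i',i'') \le 2r + r' + r''$. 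Both $i$ and $i^*$ lie in $X'$, so both are admissible centers for the new ball.

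First I would analyze two candidate centers and keep whichever yields the smaller radius. Centering at $i$: any $p \in B(i',3r')$ satisfies $d(i,p) \le d(i,i') + 3r' \le (r+r') + 3r' = r + 4r' \le r + 4s$, so radius $r + 4s$ suffices. Centering at $i^*$: its own tripled ball is covered with radius $3s$, while for any other $(i',r') \in G$ a point $p \in B(i',3r')$ has $d(i^*,p) \le d(i^*,i') + 3r' \le (2r + s + r') + 3r' = 2r + s + 4r' \le 2r + s + 4t$, where $t$ denotes the second-largest radius in $G$. Hence the second candidate has radius at most $\max(3s,\, 2r + s + 4t)$.

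Next I would combine these by a case analysis on the size of $s$ relative to $r$, using $T \ge s$ always and $T \ge s + t$ whenever $G$ has at least two balls. When $s$ is small (below a threshold near $\tfrac38 r$), the first candidate suffices because $r + 4s = r + 3T + (4s - 3T) \le r + 3T + s \le \tfrac{11}{8}r + 3T$. When $s$ is large (above a threshold near $\tfrac58 r$), the second candidate wins: $3s \le 3T$ trivially, and $2r + s + 4t \le \tfrac{11}{8}r + 3T$ follows from $3T \ge 3s + 3t$ together with $t \le s$. A single-ball group is immediate, since then the second candidate has radius exactly $3s = 3T$. The two thresholds are tuned precisely so that the extremal case across the whole range meets the coefficient $\tfrac{11}{8}$ on $r$ while the coefficient on $T$ stays at exactly $3$, so that tripling is not penalized.

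The step I expect to be the main obstacle is the intermediate regime, where $s$ is comparable to $r$ and $G$ contains two balls of comparable radius: there neither candidate is controlled by the crude estimates $T \ge s$ or $t \le s$ applied separately, and one must combine the two triangle-inequality bounds at once. The observation that closes this gap is that extra balls only increase $T$ (relaxing the target $\tfrac{11}{8}r + 3T$) without affecting either candidate radius beyond its dependence on the two largest radii, so the extremal configuration is a group of exactly two balls; a direct comparison of $\min\bigl(r+4s,\ \max(3s,\, 2r+s+4t)\bigr)$ against $\tfrac{11}{8}r + 3(s+t)$ over $s \ge t \ge 0$ then certifies the bound (indeed with a little room to spare).
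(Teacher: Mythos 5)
Your proposal is correct, and it takes a leaner route than the paper. The paper's proof considers \emph{three} candidate centers: $i_1 = i$, $i_2$ (the center of the largest ball in the group, your $i^*$), and additionally $i_3$, a point in $B(i_1,R_1) \cap B(i_2,R_2)$, and then runs a four-case analysis with thresholds $R_3 \lessgtr R_2/3$ and $R_2$ against $\tfrac{6}{5}R_1$ and $\tfrac{3}{8}R_1$. Your two candidate radii coincide exactly with the paper's first two bounds, $C^{(1)} = R_1 + 4R_2$ and $C^{(2)} = \max\{3R_2,\, 2R_1 + R_2 + 4R_3\}$ (in your notation $r+4s$ and $\max(3s,\,2r+s+4t)$), but you drop $i_3$ entirely, and this actually suffices: with $T \geq s+t$, the first candidate fails the target only when $s > \tfrac{3}{8}r + 3t$, the second only when $2s < \tfrac{5}{8}r + t$, and these regions cannot intersect since together they would force $\tfrac{1}{16}r + \tfrac{5}{2}t < 0$. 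Your reduction to two-ball groups (extra balls only inflate $T$ while the candidate radii depend only on the two largest radii) is also sound, and your "room to spare" remark is right: optimizing the crossing point $3s = r + 4t$ at $t=0$ shows the two-candidate argument proves the stronger bound $\tfrac{4}{3}\cdot r + 3\cdot cost(G_{(i,r)})$, with the worst case at $s = r/3$, $t = 0$. So the paper's third candidate buys nothing for this particular theorem; indeed, feeding the coefficient $\tfrac{4}{3}$ into the downstream \textbf{LP-Choose} analysis would even improve the final MSR guarantee slightly, from $\tfrac{288}{85} \approx 3.388$ to $\tfrac{27}{8} = 3.375$. The only soft spot is that your final two-variable verification is asserted rather than written out, but it is the routine disjointness check displayed above, so the argument is complete in substance.
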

The exact choice of ball we use for the analysis depends on the composition of the group, namely the total and maximum radii of balls in $G_{(i,r)}$ versus the radius $r$ itself. In \cite{charika04}, the ball they select has cost at most $r + 4 \cdot cost(G_{(i,r)})$. While our analysis has a higher dependence on $r$, when considered as an alternative solution to the one that just triples all balls in $\cB_2$ we end up with a better overall solution.


\begin{figure}
    \centering
    \includegraphics[width=0.75\linewidth]{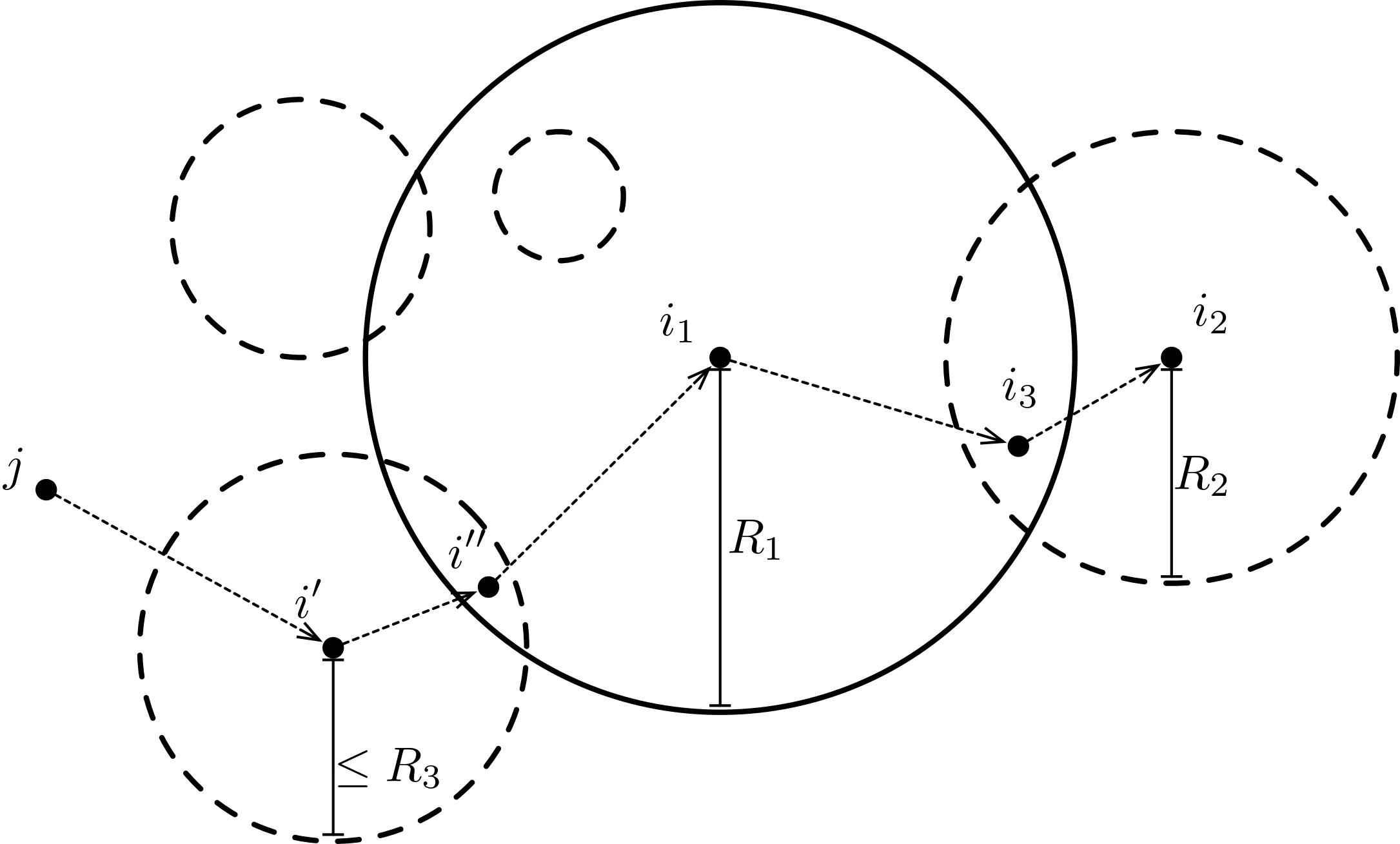}
    \caption{A depiction of a group $G_{(i_1,R_1)}$. The solid ball is $B(i_1,R_1)$ and the dashed balls are those in $G_{(i_1,R_1)}$. Point $j$ is covered by tripling the ball centered at $i'$. The dashed path depicts the way we bound $d(j,i_2)$ in the second part of the case {\bf Centering at $i_2$}.}
    \label{fig:pic2}
\end{figure}


For now, fix a single group $G_{(i,r)} \in \cG$.
Let $R_1$ denote $r$, $R_2$ be the maximum radius of a ball in $G_{(i,r)}$ and $R_3$ be the maximum radius among all other balls in $G_{(i,r)}$ apart from the one defining $R_2$. If $G_{(i,r)}$ has only one ball, then let $R_3 = 0$.
That is, $0 \leq R_3 \leq R_2$ but it could be that $R_3 = R_2$, i.e. there could be more than one ball from $G_{(i,r)}$ with maximum radius. We also let $i_1$ denote $i$, $i_2$ be the center of any particular ball with maximum radius in $G_{(i,r)}$, and $i_3$ be any single point in $B(i_1, R_1) \cap B(i_2, R_2)$. There is at least one since each ball in $G_{(i,r)}$ intersects $B(i,r)$ by construction of the groups.

Next we describe the radius of a ball that would be required if we centered it at one of $i_1, i_2$ or $i_3$. 
Consider any $j \in B(i', 3r')$ for some $(i',r') \in G_{(i,r)}$. Let $i''$ be any point in $B(i_1,r) \cap B(i',r')$.
We bound the distance of $j$ from each of $i_1, i_2$ and $i_3$ to see what radius would suffice for each of these three possible centers. Figure \ref{fig:pic2} depicts this group and one case of the analysis below.

\begin{itemize}
\item {\bf Centering at $i_1$}. Simply put,
\[ d(j,i_1) \leq d(j,i') + d(i',i'') + d(i'',i_1) \leq 3 \cdot R_2 + R_2 + R_1 = R_1 + 4\cdot R_2. \]
So radius $C^{(1)} := R_1 + 4\cdot R_2$ suffices if we choose $i_1$ as the center.
\item {\bf Centering at $i_2$}.
If $(i',r') = (i_2, R_2)$ then $d(j,i_2) \leq 3\cdot R_2$.
Otherwise, $r' \leq R_3$ and
\[ d(j,i_2) \leq d(j,i') + d(i',i'') + d(i'',i_1) + d(i_1,i_3) + d(i_3, i_2) \leq 3 \cdot R_3 + R_3 + R_1 + R_1 + R_2\] \[= 2\cdot R_1 + R_2 + 4 \cdot R_3. \]
So radius $C^{(2)} := \max\{3 \cdot R_2, 2\cdot R_1 + R_2 + 4 \cdot R_3\}$ suffices if we choose $i_2$ as the center.
\item {\bf Centering at $i_3$}.
If $(i',r') = (i_2, R_2)$ then $d(j,i_3) \leq d(j, i_2) + d(i_2, i_3) \leq 3\cdot R_2 + R_2 = 4\cdot R_2$.
Otherwise, $r' \leq R_3$ and we see
\[ d(j, i_3) \leq d(j, i') + d(i', i'') + d(i'', i_1) + d(i_1, i_3) \leq 3 \cdot R_3 + R_3 + R_1 + R_1 = 2\cdot R_1 + 4 \cdot R_3. \]
So radius $C^{(3)} := \max\{4 \cdot R_2, 2\cdot R_1 + 4 \cdot R_3\}$ suffices if we choose $i_3$ as the center.
\end{itemize}

With these bounds, we now describe how to choose a single ball covering the points covered by tripled balls in $G_{(i,r)}$ in a way that gives a good bound on the minimum-radius ball covering these points. The following cases employ particular constants to decide which center should be used, these have been optimized for our approach. 

The final bounds are stated to be of the form $3 \cdot C_{(i,r)}$ plus some multiple of $r$.
Let $C_{(i,r)} = \sum_{(i',r') \in G_{(i,r)}} r$ be the total radii of all balls in $G_{(i,r)}$. So $\sum_{G_{(i,r)} \in \mathcal G} C_{(i,r)} = cost(\cB_1) = C_1$.

\begin{itemize}
\item {\bf Case}: $R_3 > R_2/3$. Then the ball $B'_{(i,r)}$ is selected to be $B(i_1, C^{(1)})$. Note $4/3 \cdot R_2 < R_2 + R_3 \leq C_{(i,r)}$ so $C^{(1)} \leq r +3 \cdot C_{(i,r)}$.
\item {\bf Case}: $R_3 \leq R_2/3$ and $R_2 \geq \frac{6}{5} \cdot R_1$. The ball $B'_{(i,r)}$ is selected to be $B(i_2, C^{(2)})$.
Note $C^{(2)} \leq \frac{6}{5}  \cdot R_1 + 3 \cdot R_2 \leq \frac{6}{5}   \cdot r + 3 \cdot C_{(i,r)}$.
\item {\bf Case}: $R_3 \leq R_2/3$ and $\frac{6}{5} \cdot R_1 > R_2 \geq \frac{3}{8} \cdot R_1$. The ball $B'_{(i,r)}$ is selected to be $B(i_3, C^{(3)})$. Note
$C^{(3)} \leq \frac{11}{8} \cdot R_1 + 3 \cdot R_2 \leq \frac{11}{8}\cdot r + 3 \cdot C_{(i,r)}$.
\item {\bf Case}: $R_3 \leq R_2/3$ and $\frac{3}{8} \cdot R_1 > R_2$. The ball $B'_{(i,r)}$ is selected to be $B(i_1, C^{(1)})$. Note $C^{(1)} \leq \frac{11}{8} \cdot R_1 + 3 \cdot R_2 \leq \frac{11}{8}\cdot r + 3 \cdot C_{(i,r)}$.
\end{itemize}
In any case, we see that by selecting $B'_{(i,r)}$ optimally, the radius is at most $\frac{11}{8} \cdot r + 3 \cdot C_{(i,r)}$. Also, since $R_1, R_2, R_3 \leq 2 \cdot R_m$ by Theorem \ref{thm:bipoint}, then the radius of $B'_{(i,r)}$ is also seen to be at most, say, $14 \cdot R_m$.
That is because of two facts
\begin{enumerate}
    \item With additional preprocessing described in next chapter, a ball in the output of rounding has radius at most $2\cdot R_{m}$, and
    \item If we pick our center to be $i_{2}$, we might have the case where the radius is $2\cdot R_{1} + R_{2} + 4 \cdot R_{3}$.
\end{enumerate}
Then, by replacing $R_{1}$, $R_{2}$, and $R_{3}$ by $2 \cdot R_{m}$, the radius of the fractional group is bounded by $14 \cdot R_{m}$.

\subsubsection*{Choosing which groups to merge}
For each group $G_{(i,r)} \in \cG$, we consider two options. Either we select all balls in $G_{(i,r)}$ with triple their original radii (thus, with total cost $3 \cdot C_{(i,r)}$),
or we select the single ball $B'_{(i,r)}$.
We want to do this to minimize the resulting cost while ensuring the number of centers open is at most $k'$. To help with this, we consider the following linear program. For each $G_{(i,r)} \in \cG$ we have a variable $z_{(i,r)}$
where $z_{(i,r)} = 0$ corresponds to selecting the $|G_{(i,r)}|$ balls with triple their original radius and $z_{(i,r)} = 1$ corresponds to selecting the single ball $B'_{(i,r)}$. As noted in the previous section,
the radius of $B'_{(i,r)}$ is at most $\frac{11}{8} \cdot r + 3 \cdot C_{(i,r)}$ and also at most $14 \cdot R_m$.

\[
\begin{array}{rrcll}
{\bf minimize}: & \sum_{G_{(i,r)} \in \cG} (1-z_{(i,r)}) \cdot 3 \cdot C_{(i,r)} + z_{(i,r)} \cdot cost(\{B'_{(i,r)}\}) \tag{\bf{LP-Choose}} \label{lp:choose} \\
{\bf subject~to}: & \sum_{G_{(i,r)} \in \cG} \left((1-z_{(i,r)}) \cdot |G_{(i,r)}| + z_{(i,r)}\right) & \leq & k' \\
& z_{(i,r)} & \in & [0,1] & \forall~ G_{(i,r)} \in \cG
\end{array}
\]

Recall that $a,b$ are such that $a,b \geq 0, a + b = 1$ and $a \cdot k_1 + b \cdot k_2 = k'$. Thus, setting $z_{(i,r)} = a$ for each $G_{(i,r)} = 1$ is feasible as $1-z_{(i,r)} = b$, $\sum_{G_{(i,r)} \in \cG} |G_{(i,r)}| = k_2$, and there are at most $k'_1$ terms in this sum. The value of this LP solution is
\begin{equation*}
    \begin{split}
        \sum_{G_{(i,r)} \in \mathcal{G}} \left(3 \cdot b  + 3 \cdot a\right) C_{(i,r)} &= 3 \cdot \left(\sum_{G_{(i,r)} \in \mathcal{G}} C_{(i,r)}\right) + \frac{11}{8} \cdot b \cdot \left(\sum_{G_{(i,r)} \in \mathcal{G}} r\right) \\
        & = 3 \cdot C_{1} + \frac{11}{8} \cdot b \cdot C_{2}
    \end{split}
\end{equation*}
so the optimum solution to \ref{lp:choose} has value at most this much as well.

To consolidate the groups, compute an optimal extreme point to \ref{lp:choose}. Since all but one constraint are $[0,1]$-box constraints, there is at most one variable $z_{(i,r)}$ that does not take an integer value. Since $|G_{(i,r)}| \geq 1$, then setting $z_{(i,r)}$ to 1 yields a feasible solution whose cost increases by at most the radius of $B'_{(i,r)}$, which was observed to be at most $14 \cdot R_m \leq 14 \cdot \epsilon \cdot OPT$.

Summarizing,
\begin{lemma}
In polynomial time, we can compute a set of at most $k'$ balls with total radius at most $\frac{11}{8} \cdot b \cdot C_2 + 3 \cdot C_1 + 14 \cdot \epsilon \cdot OPT$ which cover all points in $X'$.
\end{lemma}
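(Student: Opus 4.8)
The plan is to read the lemma as the rounding of the relaxation \ref{lp:choose}, whose feasible region already encodes, for every group, the choice between tripling all its balls (variable value $0$) and replacing it by the single ball $B'_{(i,r)}$ (variable value $1$), subject to the cardinality budget $k'$. First I would exhibit one explicit fractional point, namely $z_{(i,r)} = b$ for every $G_{(i,r)} \in \cG$, and check it is feasible: since $\cG$ partitions $\cB_1$ we have $\sum_{G_{(i,r)} \in \cG} |G_{(i,r)}| = k_1$, while the number of nonempty groups is at most $|\cB_2| = k_2$, so the budget left-hand side evaluates to $a \cdot k_1 + b \cdot |\cG| \le a k_1 + b k_2 = k'$ by the defining property of $a,b$. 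Evaluating the objective at this point, each group contributes $a \cdot 3 C_{(i,r)} + b \cdot cost(\{B'_{(i,r)}\})$; plugging in the covering-group bound $cost(\{B'_{(i,r)}\}) \le \tfrac{11}{8} r + 3 C_{(i,r)}$ makes the per-group cost telescope to $3 C_{(i,r)} + \tfrac{11}{8} b r$, and summing with $\sum_{G_{(i,r)} \in \cG} C_{(i,r)} = C_1$ and $\sum_{(i,r) \in \cB_2} r = C_2$ gives total value $3 C_1 + \tfrac{11}{8} b C_2$. Hence the optimum of \ref{lp:choose} is at most this.

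Next I would take an optimal extreme point $z^*$ of \ref{lp:choose}. Because every constraint except the single budget inequality is a box constraint, a basic feasible solution has at most one strictly fractional coordinate, say $z^*_{(i_0,r_0)}$; I would round it up to $1$. Raising this coordinate to $1$ preserves feasibility, since the affected group's contribution to the budget, $(1-z^*)\,|G_{(i_0,r_0)}| + z^*$, is at least $1$ (as $|G_{(i_0,r_0)}| \ge 1$) and becomes exactly $1$, so the left-hand side cannot increase. The cost grows by at most the whole cost of the single ball, and here I would use the \emph{absolute} bound $cost(\{B'_{(i_0,r_0)}\}) \le 14 R_m \le 14 \epsilon \cdot OPT$ (from the covering-group section together with $R_m \le \epsilon \cdot OPT$ of Theorem \ref{thm:summary}) rather than the $\tfrac{11}{8} r + 3 C_{(i_0,r_0)}$ bound. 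The rounded $z$ is now integral, opens at most $k'$ centers, and has cost at most $3 C_1 + \tfrac{11}{8} b C_2 + 14 \epsilon \cdot OPT$.

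It remains to confirm coverage of $X'$, which holds for any $0/1$ assignment. By Theorem \ref{thm:bipoint} the tripled balls $\{B(i,3r) : (i,r) \in \cB_1\}$ already cover $X'$, and each single ball $B'_{(i,r)}$ was chosen precisely to cover $\bigcup_{(i',r') \in G_{(i,r)}} B(i',3r')$. Since $\cG$ partitions $\cB_1$, every point of $X'$ lies in some tripled $\cB_1$-ball, and that ball's group is served either by its tripled balls or by the covering single ball; in both cases the point is covered. Together with the previous paragraphs this gives a feasible MSR solution on $X'$ of the stated size and cost, all computed in polynomial time, since \ref{lp:choose} has $O(|\cB_2|)$ variables and the groups, the balls $B'_{(i,r)}$, and an extreme-point optimum are all obtainable in polynomial time.

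The step I expect to be the main obstacle is the rounding with a \emph{controlled} cost increase: one must round the single fractional variable in the direction $z \to 1$ so that the budget stays satisfied (this is where $|G_{(i,r)}| \ge 1$ is essential) while charging the extra cost to the small absolute radius $14 R_m$ rather than to the potentially large term $\tfrac{11}{8} r + 3 C_{(i,r)}$; only the former is negligible against $\epsilon \cdot OPT$, so the whole guessing-with-$\epsilon$ framework hinges on using the right bound here.
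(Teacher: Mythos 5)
Your proof is correct and follows essentially the same route as the paper: exhibit the fractional point with every group variable set to the same value, bound the \ref{lp:choose} optimum by $3C_1 + \tfrac{11}{8}\, b\, C_2$, take an optimal extreme point with at most one fractional coordinate, round it up to $1$ using $|G_{(i,r)}| \geq 1$ for feasibility, and charge the increase to the absolute bound $14 R_m \leq 14\epsilon \cdot OPT$. In fact you use the correct assignment $z_{(i,r)} = b$ (the paper's text says $z_{(i,r)} = a$, an apparent typo, since its own value computation and the analogous MSD argument both correspond to $z_{(i,r)} = b$), and your coverage and polynomial-time observations match the paper's implicit reasoning.
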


Finally, we can complete our analysis. Recall our simple solution of tripling the balls in $\cB_2$ has cost at most $3 \cdot C_2$ and the more involved solution jut described has cost at most $3 \cdot C_1 +  \frac{11}{8} \cdot a \cdot C_2 + 21 \cdot \epsilon \cdot OPT$. Now,
\[ \min\left\{3 \cdot C_2, 3 \cdot C_1 + \frac{11}{8} \cdot b \cdot C_2 \right\} \leq (1-d) \cdot 3 \cdot C_2 + d \cdot \left(b \cdot \frac{11}{8} \cdot C_2 + 3 \cdot C_1\right) \]
holds for any $0 \leq d \leq 1$. To maximize the latter, we set $d =  \frac{3(1-b)}{\frac{11}{8} \cdot b^{2} - \frac{11}{8} \cdot b + 3}$ and see the minimum of these two terms is at most
\[ \left(\frac{9}{\frac{11}{8} \cdot b^{2} - \frac{11}{8} \cdot b + 3}\right) \cdot (a C_{1} + b C_{2}) \leq  \left(\frac{9}{\frac{11}{8} \cdot b^{2} - \frac{11}{8} \cdot b + 3}\right) \cdot OPT' \]
where we have used bound \ref{eqn:bound} for the last step.

The worst case occurs at $b = \frac{1}{2}$, at which the bound becomes $\frac{85}{288} \cdot OPT'$. Thus, the cost of the solution is at most $\frac{288}{85} \cdot OPT' + 21 \cdot \epsilon \cdot OPT$.
Adding the balls $\mathcal B'$ we guessed to also cover the points in $X-X'$, we get get a solution covering all of $X$ with total radii at most
\[ cost(\mathcal B') + \frac{288}{85} \cdot OPT' + 14 \cdot \epsilon OPT = OPT - OPT' + \frac{288}{85} \cdot OPT' + 14 \cdot \epsilon \cdot OPT \leq 3.389 \cdot OPT \]
for sufficiently small $\epsilon$.

The entire algorithm for MSR that we have just presented is summarized in Algorithm \ref{alg:msr}.
\begin{algorithm}
\caption{MSR Approximation}
\begin{algorithmic} \label{alg:msr}
\STATE $\mathcal S \gets \emptyset$ ~~~~~~~\COMMENT{The set of all solutions seen over all guesses}

\FOR{each subset $\cB'_j$ of $1/\epsilon$ balls}
\STATE let $X',R_m$ be as described in Section \ref{sec:guess}
\STATE $(\mathcal A, R) \gets$ \textsc{$k$-center} 2-approximation on $X'$
\IF{$R > 2 \cdot R_m$}
\STATE {\bf reject} this guess $\cB'$ and continue with the next
\ENDIF
\STATE let $\cB_1, \cB_2, \lambda$ be the bi-point solution from Algorithm \ref{alg:bipoint} \COMMENT{see Theorem \ref{thm:bipoint}}
\STATE let $\mathcal G$ be the groups (a partitioning of $\cB_1$) described in Section \ref{sec:combining}
\STATE for each $G_{(i,r)} \in \cG$, let $B'_{(i,r)}$ be the cheapest ball covering $\cup_{(i',r') \in G_{(i,r)}} B(i',3 \cdot r')$
\STATE let $z'$ be an optimal extreme point to \ref{lp:choose}
\STATE $\cB^{(1)} \gets \{B'_{(i,r)} : z'_{(i,r)} > 0\} \cup \bigcup_{z'_{(i,r)} = 0} \{(i', 3 \cdot r') : (i',r') \in G_{(i,r)}\}$
\STATE $\cB^{(2)} \gets \{(i, 3 \cdot r) : (i,r) \in \cB_2\}$
\STATE let $\cB$ be $\{(i,3 \cdot r) : (i,r) \in \cB'\}$ plus the cheaper of the two sets $\cB^{(1)}$ and $\cB^{(2)}$
\STATE $\mathcal S \gets \mathcal S \cup \{\cB\}$
\ENDFOR
\RETURN the cheapest solution from $\mathcal S$
\end{algorithmic}
\end{algorithm}


\section{Minimum Sum of Diameters}\label{sec:msd}

It is easy to see that an $\alpha$-approximation for MSR yields an $\alpha$-approximation for MSD. That is, if we pick an arbitrary point to act as a center in each of the optimum MSD solution clusters, we see the optimum MSR solution has cost at most $OPT_{MSD}$.
So the $\alpha$-approximation will find an MSR solution with cost at most $\alpha \cdot OPT_{MSD}$. The diameter of any ball is at most twice its radius, so this yields a MSD solution of cost at most $2 \cdot \alpha \cdot OPT_{MSD}$.

In this section, we observe that a slight modification to the MSR approximation in fact yields a 6.546-approximation for MSD. Letting $\alpha = 3.389$ denote our approximation guarantee for MSR, note that $6.546 < 2 \cdot \alpha = 6.778$. That is, we are doing better than the trivial reduction of MSD to MSR that loses a factor of 2.


First note that for any $Y \subseteq X$ with diameter, say, $diam(Y)$, for any $i \in Y$ we have $Y \subseteq B(i, diam(Y))$ and $diam(B(i,diam(Y)) \leq 2 \cdot diam(Y)$.
So while it is difficult to guess any single cluster from the optimum MSD solution, we can guess the $1/\epsilon$ largest diameters (the values) and guess balls $\cB'$ with these radii that cover these largest-diameter clusters. Let $OPT'_D$ denote the total diameter of the remaining clusters from the optimum solution, $k' = k-\frac{1}{\epsilon}$, $X'$ be the remaining points to cluster, and $R_m = \min\{r : (i,r) \in \cB'\} \leq \epsilon \cdot OPT_D$.

For any $\lambda \geq 0$, note $OPT_{\text{\ref{lp:primal}}} + \lambda \cdot k' \leq OPT'_D$ as picking any single center from each cluster in optimum solution on $X'$ yields an MSR solution with cost at most $OPT'_D$.
We then use Theorem \ref{thm:bipoint} to get a bi-point solution $\cB_1, \cB_2, \lambda$.

If we triple the balls in $\cB_2$ and output those clusters, we get a solution with total diameter $\leq 6 \cdot cost(\cB_2)$.
For the other case, we again form groups $\mathcal G$.
Instead of picking a ball $B'_{(i,r)}$ for each group $G_{(i,r)} \in \mathcal G$, we simply let $B'_{(i,r)}$ be the set of points covered by the tripled balls in $G_{(i,r)}$. 

\begin{claim}
$diam(B'_{(i,r)}) \leq 2 \cdot r + 6 \cdot C_{(i,r)}$ where $B'_{(i,r)}$ is all points covered by the tripled balls from $G_{(i,r)}$.
\end{claim}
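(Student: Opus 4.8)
The plan is to bound $d(j,j')$ for an arbitrary pair of points $j, j' \in B'_{(i,r)}$, where each of them is covered by some tripled ball: say $j \in B(i_a, 3 r_a)$ and $j' \in B(i_b, 3 r_b)$ for balls $(i_a, r_a), (i_b, r_b) \in G_{(i,r)}$. The key estimate I would reuse is precisely the ``Centering at $i_1$'' computation from the MSR analysis. Because every ball $(i', r') \in G_{(i,r)}$ was placed in the group for intersecting $B(i,r)$, there is a point $i'' \in B(i,r) \cap B(i', r')$, so for any $j \in B(i', 3 r')$ the triangle inequality through $i''$ gives $d(j,i) \le d(j,i') + d(i',i'') + d(i'',i) \le 3 r' + r' + r = r + 4 r'$. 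Thus each covered point lies within distance $r + 4 r'$ of the hub $i$, with $r'$ the radius of its own covering ball, and $C_{(i,r)} = \sum_{(i',r') \in G_{(i,r)}} r'$ is the total radius of the group.

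First I would dispatch the generic case in which $j$ and $j'$ can be assigned to two \emph{distinct} balls of the group. Routing through the hub $i$ and applying the estimate above to each endpoint yields
\[ d(j,j') \le d(j,i) + d(i,j') \le (r + 4 r_a) + (r + 4 r_b) = 2r + 4(r_a + r_b). \]
Since $(i_a, r_a)$ and $(i_b, r_b)$ are distinct members of $G_{(i,r)}$, their radii are two different summands of $C_{(i,r)}$, so $r_a + r_b \le C_{(i,r)}$ and hence $d(j,j') \le 2r + 4 C_{(i,r)}$, comfortably inside the claimed bound.

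The one case needing separate (and actually tighter) treatment is when both points lie in a common tripled ball $B(i', 3 r')$, in particular when $G_{(i,r)}$ consists of a single ball so that $B'_{(i,r)} = B(i', 3 r')$. Here I would not route through $i$ but bound directly: $d(j,j') \le 6 r' \le 6 C_{(i,r)}$. This degenerate case is exactly what forces the coefficient $6$, since the generic case only needs $4$; taking the maximum of the two bounds gives the uniform estimate $diam(B'_{(i,r)}) \le 2r + 6 C_{(i,r)}$.

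I do not anticipate a real obstacle: the argument is a direct adaptation of the radius bound $C^{(1)} = R_1 + 4 R_2$ to the diameter setting, where we now pay for two hops to the hub instead of one. The only point requiring care is the bookkeeping of which summands of $C_{(i,r)}$ are consumed, namely recognizing that in the multi-ball case two distinct radii appear and are dominated by the full sum, whereas the single-ball case must be handled directly and is the genuine source of the factor $6$.
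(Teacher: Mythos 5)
Your proposal is correct and follows essentially the same argument as the paper: the paper also splits into the case where both points are covered by the same tripled ball (giving $6r' \leq 6 \cdot C_{(i,r)}$) and the case of two distinct balls, where it routes $j' \to i' \to i \to i'' \to j''$ using the intersection points with $B(i,r)$ to get $4r' + r + r + 4r'' \leq 2r + 4 \cdot C_{(i,r)}$, exactly your hub argument. The only cosmetic difference is that you make the intersection-point detour explicit via the ``Centering at $i_1$'' bound $d(j,i) \leq r + 4r'$, while the paper folds it into the chain of inequalities; the constants and case structure are identical.
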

\begin{proof}
Consider any two points $j', j''$ covered by $\cup_{(i',r') \in G_{(i,r)}} B(i', 3\cdot r')$, say $(i',r')$ and $(i'',r'')$ are the balls in $G_{(i,r)}$ which, when tripled, cover $j'$ and $j''$, respectively. If $(i',r') = (i'',r'')$ (i.e. it is the same tripled ball from $G_{(i,r)}$ that covers both $j', j''$) then $d(j',j'') \leq 6 \cdot r' \leq 6 \cdot C_{(i,r)}$. Otherwise, we have $r' + r'' \leq C_{(i,r)}$ and
\[ d(j',j'') \leq d(j', i') + d(i',i) + d(i,i'') + d(i'',j'') \leq 4\cdot r' + r + r + 4 \cdot r'' \leq 2\cdot r + 4 \cdot C_{(i,r)}. \]
In either case, we can upper bound $d(j',j'') \leq 2 \cdot r + 6 \cdot C_{(i,r)}$, so $diam(B'_{(i,r)})$ is bounded by the same.
\end{proof}

We use an LP similar to \ref{lp:choose} except with the modified objective function to reflect the diameter costs of the corresponding choices.
\[
\begin{array}{rrcll}
{\bf minimize}: & \sum_{G_{(i,r)} \in \cG} (1-z_{(i,r)}) \cdot 6 \cdot C_{(i,r)} + z_{(i,r)} \cdot diam(B'_{(i,r)}) \tag{\bf{LP-Choose MSD}} \label{lp:choosemsd} \\
{\bf subject~to}: & \sum_{G_{(i,r)} \in \cG} \left((1-z_{(i,r)}) \cdot |G_{(i,r)}| + z_{(i,r)}\right)  \leq  k' \\
& z_{(i,r)}  \in  [0,1] \qquad \forall~ G_{(i,r)} \in \cG
\end{array}
\]

For $a,b\geq 0$, we let $a+b =1$ and $a\cdot k_{1} + b\cdot k_{2} = k'$, similar to MSR. Setting $z_{(i,r)} = b$ shows the optimum LP solution value is at most
\begin{eqnarray*}
& & \sum_{G_{(i,r)} \in \cG} a \cdot 6 \cdot C_{(i,r)} + b \cdot diam(B'_{(i,r)}) \\
& = & \sum_{G_{(i,r)} \in \cG} a \cdot 6 \cdot C_{(i,r)} + b \cdot 2 \cdot r + b \cdot 6 \cdot C_{(i,r)} \\
& = & (6a + 6b) \cdot \sum_{G_{(i,r)} \in \cG} C_{(i,r)} + 2b \cdot \sum_{G_{(i,r)} \in \cG}  \\
& = & 6 \cdot C_2 + 2 \cdot b \cdot C_1.
\end{eqnarray*}

In an optimal extreme point, at most one variable in \ref{lp:choosemsd} that is fractional so we set it to $1$ we pick to corresponding group to be covered by a single ball just like we did with the MSR algorithm.
 
Using the better of this solution or just tripling the balls in $\cB_2$ yields an MSD solution with total diameter at most
$\min\left\{6 \cdot C_2, 6 \cdot C_1 + 2 \cdot b \cdot C_2 + O(\epsilon) \cdot OPT_{D}\right\} \leq (1-d) \cdot 6 \cdot C_2 + d \cdot \left(b \cdot 2 \cdot C_2 + 6 \cdot C_1\right)$
 for any $d \in [0,1]$. Let 
 By setting $d =  \frac{6(1-b)}{2 \cdot b^{2} - 2 \cdot b + 6}$, the worst case analysis for the final bound happens when $b = 1/2$, at which we see the cost is at most $\frac{72}{11} \cdot OPT'_D + O(\epsilon) \cdot OPT_D$.
Adding this to the $1/\epsilon$ balls we guessed (whose diameters are at most twice their radius) and choosing $\epsilon$ sufficiently small shows
we get a solution with an approximation guarantee of 6.546 for MSD, which is better than two times the MSR guarantee. 

\section{Minimum Sum of Squared Radii}
The algorithm for MSSR follows the exact same procedure as MSR, the only modification happens at the phase where we analyze the cost of a group based on the center picked for it. 
The case by case analysis is similar to the MSR Case. 
However, note that the objective function value in the lagrangified LP is in the following form:
\begin{equation*}
    \sum_{(i,r)}x_{(i,r)}(r^{2} + \lambda)
\end{equation*}

Remember the upper-bounds for a single ball that is centerd in eitehr of $i_{1}, i_{2}$, and $i_{3}$, as depicted in \ref{fig:pic2}.
With these bounds, we now describe how to choose a single ball covering the points covered by tripled balls in $G_{(i,r)}$ in a way that gives a good bound on the minimum-radius ball covering these points. 

The optimization steps have been deferred to the Appendix B. 

We then find ourselves in two cases, 
\begin{enumerate}
    \item $R_{2} \geq \frac{1}{2}\cdot R_{1}$, then the cost is $9 \cdot R_{2} \leq 9 \cdot C(G_{(i_{1}, R_{1})})$, and
    \item $R_{2} \leq \frac{1}{2}\cdot R_{1}$, the cost is upper-bounded by $(\frac{27}{4}\cdot R_{1}^{2} + 9 \cdot R_{2}^{2})$.
\end{enumerate}
In any case, we see that by selecting $B'_{(i,r)}$ optimally, the radius is at most $\frac{27}{4}\cdot R_{1}^{2} + 9 \cdot R_{2}^{2}$. 
This is the solution where we aim to pick a group and merge it. 
Similar to before, we will compare the outcome of the corresponding \textit{Choose LP} with $C_{2}$, the solution in which we just tripled the radii of the output balls of Rounding.

Now we must again choose which groups to merge similar to both MSR and MSD case. 
Hence, consider the following LP-choose. 
\[
\begin{array}{rrcll}
{\bf minimize}: & \sum_{G_{(i,r)} \in \cG} (1-z_{(i,r)}) \cdot 3 \cdot C_{(i,r)} + z_{(i,r)} \cdot cost(\{B'_{(i,r)}\}) \tag{\bf{LP-Choose ({MMSR})}} \label{lp:choosemssr} \\
{\bf subject~to}: & \sum_{G_{(i,r)} \in \cG} \left((1-z_{(i,r)}) \cdot |G_{(i,r)}| + z_{(i,r)}\right)  \leq  k'& \\
 &z_{(i,r)}  \in  [0,1] & \forall~ G_{(i,r)} \in \cG
\end{array}
\]

To consolidate the groups, compute an optimal extreme point to \ref{lp:choosemssr}. 
Since all but one constraint are $[0,1]$ box constraints, there is at most one variable $z_{(i,r)}$ that does not take an integer value due to Rank Lemma. 
Since $|G_{(i,r)}| \geq 1$, then setting $z_{(i,r)}$ to 1 yields a feasible solution whose cost increases by at most the radius of $B'_{(i,r)}$, which was observed to be at most $(14 \cdot R_m)^{2} \leq 14^{2} \cdot \epsilon \cdot OPT$.

The final cost is $\min\left\{9 \cdot C_2, 9 \cdot C_1 + \frac{27}{4} \cdot b \cdot C_2 + O(\epsilon) \cdot OPT_{MSSR}\right\} \leq (1-d) \cdot 9 \cdot C_2 + d \cdot \left(b \cdot \frac{27}{4} \cdot C_2 + 9 \cdot C_1\right)$ for any $d \in [0,1]$. 

Then, we need to find $d$ such that we can upper-bound  $\min\left\{9 \cdot C_2, 9 \cdot C_1 + \frac{27}{4} \cdot b \right\}$ with an expression $\beta \cdot (aC_{1} + bC_{2})$. 
Note that the parametric equation for $\beta$
in the maximized point is
\begin{equation*}
    \beta = \frac{9d}{1-b}
\end{equation*}
which we set equal to 
\begin{equation*}
    \frac{9(1-d) + \frac{27}{4}bd}{b}.
\end{equation*}

Then, we can write $d = \frac{9(1-b)}{\frac{27}{4}b^{2} - \frac{27}{4}b + 9}$. 
Then, $\beta$ is maximized at $b = \frac{1}{2}$ and then $d$ is equal to $\frac{32}{55}$.

Then we proceed to construct the choose LP and at last, we get a 11.078-approximation algorithm.

%


\section{A Simple LMP Algorithm via Direct LP Rounding}\label{sec:lmp}
We again emphasize that one can slightly adapt the primal/dual algorithm and analysis in \cite{charika04} to prove a slightly weaker version of Theorem \ref{thm:bipoint} that would still suffice for our approximation guarantees. The main difference is that the averaging of the bipoint solution costs as given in bound \eqref{eqn:bound} from Section \ref{sec:combining} would be bounded by $(1+\epsilon') \cdot OPT$ for some $\epsilon' > 0$ (the running time depends linearly on $\log 1/\epsilon$). But we believe there is merit to our approach as it shows how may be able to cleanly avoid losing an $\epsilon$ in binary searching a Lagrangian relaxation when an LMP-algorithm is obtained through direct LP rounding.

The proof of Theorem \ref{thm:bipoint} proceeds through the usual approach of using a binary search using an LMP algorithm. We begin by describing our LMP algorithm followed by a simple ``upper-bound'' step which is used in some parts of the binary search.

Algorithm \ref{alg:round} describes the rounding procedure mentioned in Theorem \ref{thm:round}. Note it only depends on $x'$, the solution to $LP(\lambda)$ and not on $\lambda$ itself.
\begin{algorithm}
\caption{ROUND($x'$)}
\begin{algorithmic} \label{alg:round}
\STATE $\cB \gets \emptyset$
\FOR{$(i,r)$ with $x'_{(i,r)} > 0$ in non-increasing order of $r$}
\IF{$B(i,r) \cap B(i',r') = \emptyset$ for each $(i',r') \in \cB$}
\STATE $\cB \gets \cB \cup \{(i,r)\}$
\ENDIF
\ENDFOR
\RETURN $\cB$
\end{algorithmic}
\end{algorithm}


Let $x^{'}$ be the optimal solution for \ref{lp:primal}. It is important to remember that \eqref{lp:primal} and \eqref{lp:dual} only have variables/constraints for balls $B(i,r)$ with $i \in X'$ and $0 \leq r \leq 2 \cdot R_m$.
\begin{proof}[Proof of Theorem \ref{thm:round}]
Disjointedness follows by construction and clearly each ball in $\cB$ has $x_{i,r} > 0$. Consider some $j$ not covered by any ball $B(i,r)$ in $\cB$. Let $B(i',r')$ be any ball covering $j$ in the support of the LP, such a ball exists because every point is fractionally covered to an extent of at least $1$ in the LP. Since $B(i',r') \notin \cB$ there is some $B(i,r) \in \cB$ with $B(i',r') \cap B(i,r) \neq \emptyset$ and $r \geq r'$ (by how we ordered the balls in the algorithm). Let $j'$ be any point in $B(i',r') \cap B(i,r)$. Then $d(j,i) \leq d(j,i') + d(i',j') + d(j',i) \leq r' + r' + r \leq 3 \cdot r$. So $j \in B(i,3r)$, meaning if we tripled the radii of all balls in $\cB$ then $j$ would be covered.
\end{proof}

As noted in Corollary \ref{cor:round}, we have $cost(\cB) + \lambda \cdot |\cB| \leq \sum_{j \in X'} y'_j = OPT_{{\bf LP}(\lambda)}$.
Thus, we call this a ``Langrangian multipler preserving'' algorithm because if $\cB''$ is obtained by tripling the radii of the balls returned by ROUND$(x')$, then $cost(\cB'') + 3 \cdot \lambda \cdot |\cB''| \leq 3 \cdot OPT_{{\bf LP}(\lambda)}$.

\subsection{Binary Search Preparation: Starting Values and Filling a Solution}
To begin our binary search, we first. 
show that extreme values of $\lambda$ will yield solutions with $\geq k'$ and $\leq k'$ balls. This will let us set the initial window in the binary search.
\begin{lemma}\label{lem:initial}
Consider $\lambda \geq 0$ and an optimal solution $x'$ for \ref{lp:primal}. If $\lambda = 0$ then calling ROUND($x'$) will produce a solution with $|X'| > k'$ balls. If $\lambda = 2 \cdot k' \cdot R_m + 1$ then calling ROUND($x'$) will produce a solution with at most $k'$ balls.
\end{lemma}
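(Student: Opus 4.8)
The plan is to analyze the two extreme values of $\lambda$ separately, in each case reasoning about what the optimal solution $x'$ to \ref{lp:primal} must look like and then tracking how ROUND$(x')$ behaves on it.

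\textbf{The case $\lambda = 0$.} Here the per-ball penalty vanishes, so \ref{lp:primal} becomes a pure covering LP that only charges for radius. First I would observe that the zero-radius ball $B(j,0) = \{j\}$ is available for every $j \in X'$ (since $0 = d(j,j)$ is a valid radius and $0 \le 2R_m$), and each such ball costs $0+\lambda = 0$. Hence the integral solution that opens $B(j,0)$ for every $j \in X'$ is feasible with objective value $0$, so $OPT_{\text{\ref{lp:primal}}} = 0$ and every ball in the support of the optimal $x'$ must have radius $r = 0$ (otherwise it would contribute positively to the objective). Since all balls in the support are singletons $\{j\}$, they are pairwise disjoint, and so ROUND$(x')$ -- which only discards a ball when it intersects one already chosen -- keeps every ball in the support. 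To cover each of the $|X'|$ distinct points fractionally to extent $1$, the support must contain the singleton ball at each point, so ROUND returns exactly these $|X'|$ balls. Since we are in the regime $|X'| > k'$ (assumed in Section~\ref{sec:guess}), this gives the desired $|X'| > k'$ balls. I should be slightly careful that the optimal $x'$ need not literally be this singleton solution, but any optimal $x'$ has all-zero-radius support and fractionally covers each point, forcing the singleton at each point into the support; this is the one detail to state cleanly.

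\textbf{The case $\lambda = 2 k' R_m + 1$.} The idea here is that $\lambda$ is so large that opening more than $k'$ balls is never worthwhile. I would argue by contradiction on the output size: suppose ROUND$(x')$ returns a set $\cB$ with $|\cB| > k'$. By Corollary~\ref{cor:round}, $cost(\cB) + \lambda\cdot|\cB| \le OPT_{\text{\ref{lp:primal}}}$. On the other hand, comparing against the feasible singleton solution $\{B(j,0): j\in X'\}$ -- which has $\le |X'|$ balls but, more usefully, can be trimmed -- I instead bound $OPT_{\text{\ref{lp:primal}}}$ from above using the $k$-center solution $\mathcal A$ from Theorem~\ref{thm:summary}: the $k'$ balls $B(i,2R_m)$ for $i \in \mathcal A$ form a feasible integral solution of objective at most $k'\cdot(2R_m + \lambda)$. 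Combining, if $|\cB| \ge k'+1$ then $\lambda\cdot(k'+1) \le cost(\cB) + \lambda\cdot|\cB| \le OPT_{\text{\ref{lp:primal}}} \le 2k'R_m + k'\lambda$, which rearranges to $\lambda \le 2k'R_m$, contradicting $\lambda = 2k'R_m + 1$. Hence $|\cB| \le k'$.

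\textbf{Main obstacle.} The genuinely delicate step is the $\lambda=0$ argument that ROUND returns all $|X'|$ singletons rather than fewer balls: I must rule out that the optimal $x'$ uses some positive-radius ball that ROUND might then select to cover several points at once. The clean fix is to first pin down that any optimal $x'$ for $\lambda=0$ has support consisting solely of radius-$0$ balls (by the objective-value argument above), after which the disjointness of singletons makes ROUND's behavior transparent. The $\lambda$-large case is comparatively routine once the correct upper bound on $OPT_{\text{\ref{lp:primal}}}$ (via the $k$-center solution, not the singletons) is chosen, since that bound is exactly what makes the $2k'R_m$ threshold tight.
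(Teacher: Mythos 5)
Your proposal is correct and follows essentially the same route as the paper: for $\lambda = 0$ it pins down that any optimal $x'$ is supported only on radius-$0$ balls (which are disjoint singletons since colocated points were removed), forcing ROUND to return one ball per point of $X'$, and for $\lambda = 2k'R_m+1$ it derives the same contradiction $(k'+1)\lambda \le cost(\cB)+\lambda|\cB| \le OPT_{\text{\ref{lp:primal}}} \le 2k'R_m + k'\lambda$ using Corollary~\ref{cor:round} and the $k$-center solution from Theorem~\ref{thm:summary}. If anything, your $\lambda=0$ case is stated slightly more carefully than the paper's (which asserts the singleton solution is the \emph{unique} optimum, whereas you only need, and only claim, that every optimal solution has all-zero-radius support containing each singleton).
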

\begin{proof}
First consider the case $\lambda = 0$. The LP solution that sets $x_{(i,0)} = 1$ for each $i \in X'$ and all other variables to 0 has value 0. It is also the only optimal solution as supporting any variable corresponding to a ball with positive radius yields an LP solution that has strictly positive cost. So $x'$ only supports balls with radius 0. Since $i' \notin B(i,0)$ for distinct $i,i' \in X'$ (as we assumed $d(i,i') > 0$), then ROUND($x'$) will return $|X'| > k'$ balls, one per point.

For brevity, let $\delta := 2 \cdot k' \cdot R_m$.
Now, we will show that if $\lambda = \delta + 1$ the rounding algorithm for \ref{lp:primal} returns at most $k'$ balls. Suppose otherwise, i.e. that the returned set of balls $\cB'$ has size exceeding $k'$ when $\lambda = \delta+1$.
Notice
\begin{equation*}
    (k'+1) \cdot \lambda \leq |\cB'| \cdot \lambda \leq  \sum_{(i,r) \in \cB'} r + \lambda\cdot|\cB'| \leq OPT_{\text{\ref{lp:primal}}} \leq \delta + k' \cdot \lambda
\end{equation*}
where the second last bound is from Corollary \ref{cor:round} and the last bound because setting $x_{(i,r)} = 1$ for every ball in the $k$-center solution from Theorem \ref{thm:summary} (and all other variables to $0$) is a feasible LP solution with cost at most $2 \cdot k' \cdot R_m + k' \cdot \lambda$. But this is a contradiction, since $\delta = \lambda - 1$.

\end{proof}

Finally, we require the following routine FILL($\cB_1, \cB_2$) which will be used to ensure the property from Theorem \ref{thm:bipoint} that all balls in the first solution intersect at least one ball from the second solution.
\begin{algorithm}
\caption{FILL($\cB_{1}, \cB_{2}$)}
\begin{algorithmic} \label{alg:fill}
\WHILE{Some ball $(i,r) \in \cB_1$ is disjoint from balls in $\cB_2$ and $|\cB_2| < k'$}
\STATE $\cB_2 \gets \cB_2 \cup \{(i,r)\}$
\ENDWHILE

\IF {$|\cB_2| = k'$}
\STATE $\cB_1 \gets \cB_2$
\ENDIF 
\RETURN $\cB_1, \cB_2$
\end{algorithmic}
\end{algorithm}

\begin{lemma}\label{lem:move}
Let $\cB_1, \cB_2$ be two sets obtained by rounding two optimal solutions $x_1, x_2$ to \ref{lp:primal} for some common value $\lambda$. For each of the final sets $\cB$ returned by FILL$(\cB_1, \cB_2)$, we have $cost(\cB) + \lambda \cdot |\cB| \leq OPT_{{\bf LP}(\lambda)}$. Furthermore, if $\cB'_1, \cB'_2$ denotes the pair of returned sets by FILL$(\cB_1, \cB_2)$, then we have $|\cB'_1| \geq k' \geq |\cB'_2|$. Finally, each $(i,r) \in \cB'_1$ intersects at least one ball in $\cB'_2$.
\end{lemma}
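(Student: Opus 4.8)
The plan is to establish the three assertions — the cost bound, the size inequality $|\cB'_1| \geq k' \geq |\cB'_2|$, and the intersection property — by first pinning down what FILL$(\cB_1,\cB_2)$ does and then splitting into two cases according to why the while loop halts. Throughout I use the natural precondition supplied by the binary search that produces these sets (and needed for the conclusion to even make sense), namely $|\cB_1| \geq k' \geq |\cB_2|$. Observe that the loop only ever inserts balls of $\cB_1$ into $\cB_2$ and never deletes, and it stops either because no remaining ball of $\cB_1$ is disjoint from $\cB_2$, or because $\cB_2$ has grown to exactly $k'$ balls. Let $S \subseteq \cB_1$ denote the balls moved into $\cB_2$, so the set present at loop exit is $\cB_2 \cup S$; since each inserted ball was disjoint from all balls then present, an easy induction shows $\cB_2 \cup S$ is pairwise disjoint.

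For the size and intersection properties I argue by case. In the first case the loop halts with $|\cB_2 \cup S| < k'$: the if-branch is skipped, so $\cB'_1 = \cB_1$ (hence $|\cB'_1| \geq k'$) and $\cB'_2 = \cB_2 \cup S$ (hence $|\cB'_2| < k'$); moreover halting means no ball of $\cB_1 = \cB'_1$ is disjoint from $\cB'_2$, which is exactly the intersection property. In the second case the loop halts with $|\cB_2 \cup S| = k'$: the if-branch sets $\cB_1 \gets \cB_2$, so $\cB'_1 = \cB'_2 = \cB_2 \cup S$, both of size exactly $k'$, and since every ball is nonempty (it contains its own center) each ball of $\cB'_1$ trivially meets the identical ball of $\cB'_2$. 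Either way $|\cB'_1| \geq k' \geq |\cB'_2|$ and the intersection property hold.

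The heart of the argument, and the step I expect to be the main obstacle, is the cost bound for $\cB'_2$ (and, in the second case, for $\cB'_1 = \cB'_2$), because $\cB'_2$ mixes balls drawn from the supports of two different optimal solutions $x_1, x_2$, so Corollary \ref{cor:round} does not apply verbatim. The plan is to fix a single optimal dual solution $y^*$ to \eqref{lp:dual}. Since $x_1$ and $x_2$ are both optimal for \ref{lp:primal} and $y^*$ is optimal for its dual with the same value $OPT_{{\bf LP}(\lambda)}$, the pairs $(x_1,y^*)$ and $(x_2,y^*)$ are each optimal primal-dual pairs, so complementary slackness holds for both simultaneously: for every ball $B(i,r)$ in the support of $x_1$ or of $x_2$ we have the tight constraint $\sum_{j \in B(i,r)} y^*_j = r + \lambda$. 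Every ball of $\cB'_2 = \cB_2 \cup S$ lies in one of these two supports, so each has a tight dual constraint.

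Finally I combine tightness with the disjointness established above. Because the balls of $\cB'_2$ are pairwise disjoint and $y^* \geq 0$,
\[
cost(\cB'_2) + \lambda \cdot |\cB'_2| = \sum_{B(i,r) \in \cB'_2}\left(r+\lambda\right) = \sum_{B(i,r)\in\cB'_2}\ \sum_{j \in B(i,r)} y^*_j \leq \sum_{j \in X'} y^*_j = OPT_{{\bf LP}(\lambda)},
\]
which is precisely the desired bound for $\cB'_2$. The same display applied to $\cB_1$ (whose balls are disjoint and lie in the support of $x_1$) gives the bound for $\cB'_1 = \cB_1$ in the first case, and in the second case $\cB'_1 = \cB'_2$ is already covered. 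The only delicate points are verifying that one fixed $y^*$ is complementary-slack-compatible with both $x_1$ and $x_2$, and that the insertion rule of FILL preserves pairwise disjointness; both become routine once the single-dual viewpoint is adopted.
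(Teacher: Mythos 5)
Your proof is correct and takes essentially the same route as the paper's: fix a single optimal dual solution, note that complementary slackness holds for it paired with each of the primal optima $x_1$ and $x_2$, and use the pairwise disjointness of the filled set to charge each tight $r+\lambda$ value to disjoint groups of dual variables, with the size and intersection claims read off from FILL's stopping conditions. Your version is simply more explicit than the paper's (the case split on why the loop halts and the induction for disjointness are left implicit there), but no new idea is involved.
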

\begin{proof}
Let $y'$ be an optimal dual solution for this $\lambda$. Notice that both $x_1$ and $x_2$ will satisfy complementary slackness conditions along with $y'$ since $x_1$ and $x_2$ are both optimal primal solutions for the same LP and $y'$ is an optimal dual for this LP.

For a ball $(i,r)$ that is added to 
$\cB_2$ in the algorithm, since $(i,r)$ is disjoint from all balls in $\cB_2$ and since it is in the support of $x_1$ we maintain $r + \lambda = \sum_{j \in B(i,r)} y'_j$ and the invariant that balls in $\cB_2$ have their $r+\lambda$ value paid for by the dual values of disjoint subsets of points.

That $|\cB'_1| \geq k' \geq |\cB'_2|$ is immediate, given that the original sets $\cB_1, \cB_2$ satisfy this bound as well. The final condition that each ball in $\cB'_1$ intersect at least one ball in $\cB'_2$ is from the fact that either $\cB'_1 = \cB'_2$ (if $|\cB'_2| = k'$) or that the procedure stopped before $|\cB'_2|$ became $k'$ (i.e. there are no more balls in the original set $\cB_1$ that are disjoint from all balls in $\cB'_2$).
\end{proof}


\subsection{The Binary Search}
In this section, we complete the proof of Theorem \ref{thm:bipoint}.
Recall feasibility of some $x'$ for \ref{lp:primal} is not dependent on $\lambda$ itself since the constraints are independent of $\lambda$.
We call a value $\lambda > 0$ {\bf smooth} if for some $\epsilon > 0$ we have that the set of optimal extreme points to \ref{lp:primal} is the same as the set of optimal extreme points for ${\bf LP}(\lambda')$ for any $\lambda' \in [\lambda - \epsilon, \lambda + \epsilon]$. Otherwise,
we call $\lambda$ a {\bf break point}. We also call $\lambda = 0$ a break point.

We will prove that there is sufficiently large distance between consecutive break points. Our binary search algorithm will proceed until the window is small enough to enclose at most one break point, unless an earlier stopping criteria is met. At this point we can compute the break point itself and then return the required bi-point solution.

{\bf Invariant}: The binary search will maintain values $0 \leq \lambda_1 < \lambda_2$. For each $\ell = 1, 2$, let $x_\ell$ be an optimal solution to {\bf LP}($\lambda_\ell$) and let $\cB_\ell =\text{ROUND}(x_\ell)$. We also maintain $|\cB_1| \geq k' \geq |\cB_2|$ and that $x_1$ is {\em not} an optimal solution for {\bf LP}($\lambda_2$).

At each step in the binary search, we run the following check to ensure the last invariant holds.
\begin{itemize}
\item {\bf Check }: For a pair $\lambda_1, \lambda_2$, let $x_1, x_2$ be corresponding optimal LP solutions. If $x_1$ is optimal for {\bf LP}($\lambda_2$) we perform procedure in FILL on $\text{ROUND}(x_1)$ and $\text{ROUND}(x_2)$ (Lemma \ref{lem:move}) and return the resulting sets along with $\lambda_2$ as the sets from Theorem \ref{thm:bipoint}.
By properties of the balls from Lemma  \ref{lem:move} (noting $x_1$ is optimal for {\bf LP}($\lambda_2$)), the returned quantities satisfy the properties stated in Theorem \ref{thm:bipoint}.
\end{itemize}

Let $\Delta = 8 \cdot n \cdot n^{4 \cdot n^2}$, we will show $1/\Delta$ is a lower bound on the gap between break points (cf. Lemma \ref{lem:gap} below). Once $\lambda_2 - \lambda_1 \leq 1/\Delta$, then if the above check fails we have that the largest $\lambda$ for which
$x_1$ is an optimal solution to \ref{lp:primal} satisfies $\lambda \in [\lambda_1, \lambda_2]$. We show how to compute this value $\lambda$ exactly in Lemma \ref{lem:largest} and describe how to use this to find sets satisfying the requirements of Theorem \ref{thm:bipoint}.
The full binary search algorithm is described in Algorithm \ref{alg:bipoint}.

\begin{algorithm}
\caption{Binary Search to Find the B-ipoint Solution}
\begin{algorithmic} \label{alg:bipoint}
\STATE $\lambda_1 \gets 0, \lambda_2 \gets 2 \cdot R_m \cdot k' + 1$ and corresponding $x_1, \cB_1, x_2, \cB_2$ as in the invariant
\IF{{\bf Check} on $(\lambda_1, \lambda_2)$ passes}
\RETURN the corresponding solution
\ENDIF
\WHILE{$\lambda_1 + 1/\Delta < \lambda_2$}
\STATE Let $\lambda \gets (\lambda_1 + \lambda_2)/2$, $x'$ an optimal solution for \ref{lp:primal}, and $\cB$ be the output from ROUND($x'$).

\IF{$|\cB| \geq k'$}
\STATE $\lambda_1, x_1, \cB_1 \gets \lambda, x', \cB$ 
\ELSE
\STATE $\lambda_2, x_2, \cB_2 \gets \lambda, x', \cB$

\ENDIF
\IF{{\bf Check} on $(\lambda_1, \lambda_2)$ passes}
\RETURN the corresponding solution
\ENDIF
\ENDWHILE
\STATE Compute the only breakpoint $\lambda \in [\lambda_1, \lambda_2]$ (cf. Lemma \ref{lem:largest}), and let $\cB$ be the output for ROUND($x'$)

\IF{$|\cB| \geq k'$}
\STATE perform FILL($\cB, \cB_2$) and return the resulting sets along with $\lambda$
\ELSE
\STATE perform FILL($\cB_1, \cB$) and return the resulting sets along with $\lambda$
\ENDIF
\end{algorithmic}
\end{algorithm}

After the initial values $\lambda_1 = 0$ and $\lambda_2 = \delta + 1$ are set, if the checks all fail then the invariant is initially true. In each step of the search, if the checks fail then it is easy to see the invariant continues to hold.

If the loop terminates without returning, the final
 returned sets and $\lambda$-value have the properties stated in Theorem \ref{thm:bipoint}. This immediately follows from Lemma \ref{lem:move}, and the fact that $x_1$ and $x_2$ are both optimal for \ref{lp:primal} as $\lambda$ is the only break point in $[\lambda_1, \lambda_2]$.
So in $O(\log ((1+2 \cdot R_m \cdot k') \cdot \Delta)) = O(\log R_m + n^2 \log n)$ iterations, which is polynomial in the input size, the binary search will return a bi-point solution satisfying the properties stated in Theorem \ref{thm:bipoint}.

\subsubsection{Supporting Results for the Binary Search}

\begin{lemma}\label{lem:largest}
Let $x'$ be an optimal solution for {\bf LP}($\lambda_1$). In polynomial time, we can compute the greatest $\lambda$ such that $x'$ remains optimal for \ref{lp:primal}.
\end{lemma}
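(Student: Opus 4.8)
The plan is to reduce the computation to a single auxiliary linear program. The key observation is that the feasible region of \ref{lp:primal} does not depend on $\lambda$; only the objective $\sum_{(i,r)} r \cdot x_{(i,r)} + \lambda \cdot \sum_{(i,r)} x_{(i,r)}$ changes, and it does so linearly in $\lambda$. Hence I would characterize optimality of the \emph{fixed} solution $x'$ through LP duality: $x'$ is optimal for \ref{lp:primal} if and only if there exists a dual vector $y$ that is feasible for \ref{lp:dual} and satisfies complementary slackness with $x'$.

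Concretely, I would write out the complementary slackness conditions explicitly. Primal slackness forces, for each ball $(i,r)$ in the support of $x'$, that the corresponding dual constraint be tight, i.e. $\sum_{j \in B(i,r) \cap X'} y_j = r + \lambda$. Dual slackness forces $y_j = 0$ for every point $j \in X'$ that $x'$ covers to an extent strictly greater than $1$. Together with dual feasibility $\sum_{j \in B(i,r) \cap X'} y_j \leq r + \lambda$ for every candidate ball $(i,r)$ and the nonnegativity $y \geq 0$, these conditions are all linear in the variables $y_j$ and in $\lambda$, where I now treat $\lambda$ as an additional decision variable rather than a fixed parameter.

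I would then solve the linear program that maximizes $\lambda$ subject to exactly this collection of constraints. By the duality and complementary-slackness characterization above, the set of values of $\lambda$ for which $x'$ stays optimal is precisely the projection of this polyhedron (in $(y,\lambda)$-space) onto the $\lambda$-axis; since projections of convex sets are convex, this set is an interval, and it contains $\lambda_1$ because $x'$ is optimal for \textbf{LP}($\lambda_1$). Maximizing $\lambda$ therefore returns the right endpoint of this interval, which is exactly the greatest $\lambda$ for which $x'$ remains optimal. The auxiliary LP has one variable per point of $X'$ (at most $n$) plus the single variable $\lambda$, and $O(n^2)$ constraints (one per candidate ball, together with the support equalities), so it is of polynomial size and solvable in polynomial time.

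The step requiring the most care is the bookkeeping of the complementary slackness conditions: correctly identifying which dual constraints must become equalities (those indexed by balls in the support of $x'$) and which primal covering constraints are slack (the over-covered points, whose dual variables are pinned to zero). One minor technical point concerns boundedness: in principle $x'$ could remain optimal for all arbitrarily large $\lambda$, making the maximization unbounded. However, in the binary-search application this lemma is invoked only when we already know that $x'$ is optimal at $\lambda_1$ but fails at $\lambda_2$, so the desired $\lambda$ is finite and lies in $[\lambda_1, \lambda_2]$; any unbounded case is detected directly by the LP solver and does not arise in our setting.
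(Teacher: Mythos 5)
Your proposal is correct and takes essentially the same approach as the paper: both treat $\lambda$ as an additional decision variable alongside the dual variables $y_j$ and compute the answer by maximizing $\lambda$ over a single polynomial-size polyhedron whose constraints certify, via LP duality, that the fixed $x'$ remains optimal for {\bf LP}($\lambda$). The only cosmetic difference is that the paper encodes this certificate as dual feasibility plus equality of the primal and dual objective values, whereas you encode it as dual feasibility plus the complementary slackness conditions; these are equivalent characterizations, and both are linear in $(y,\lambda)$ once $x'$ is fixed.
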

\begin{proof}
Consider the following LP for this fixed value of $x'$ but having $\lambda$ as a variable and variables $y_j, j \in X'$ as in \ref{lp:dual}.
\[
\begin{array}{rrcll}
{\bf maximize}: & \lambda & \\
{\bf subject~to}: & \sum_{j \in B(i,r) \cap X'} y_j & \geq & r + \lambda & \forall~ (i,r) \text{ s.t. } r \leq 2 \cdot R_m \\
& \sum_{j \in X'} y_j & = & \sum_{(i,r) : r \leq 2 \cdot R_m} x' \cdot (r + \lambda) \\
& y, \lambda & \geq & 0
\end{array}
\]
We emphasize that $x'$ is a fixed value in this setting, so the second constraint is linear in the variables $y$ and $\lambda$.

The first and third constraints assert $y$ is a feasible dual solution for the particular $\lambda$. The second asserts its value in \ref{lp:dual} is equal to the the value of $x_1$ in \ref{lp:primal}. Thus, $x'$ is optimal for \ref{lp:primal} exactly if there is some corresponding $y$ that causes all of the constraints of the above LP to hold. So solving this LP will yield the maximum $\lambda$ such that $x'$ is an optimal solution for \ref{lp:primal}.
\end{proof}

Recall Hadamard's bound on the determinant of a matrix in terms of the lengths of its row vectors. Given an $n \times n$ matrix $N$ where we let $N_i$ denote the $i$'th row of $N$, we have
\begin{equation*}
    |\det(N)| \leq \prod_{i=1}^{n}||N_{i}||_{2}
\end{equation*}

\begin{lemma}\label{lem:gap}
For two different break points $\lambda < \lambda'$, we have $\lambda + 1/\Delta < \lambda'$ where $\Delta = 8 \cdot n^2 \cdot n^{4 \cdot n^2}$.
\end{lemma}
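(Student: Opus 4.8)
The plan is to bound the separation between two consecutive break points by appealing to the fact that each break point is the solution of a linear program with bounded bit complexity, and then using a rationality argument to show that distinct solutions of such LPs cannot be arbitrarily close. The key observation is Lemma \ref{lem:largest}: each break point $\lambda$ arises as the optimal value of the LP described there, whose constraint matrix has entries drawn from the distances $d(i,j)$ (equivalently, the radii $r$) together with the fixed optimal primal solution $x'$, all of which are rationals with controlled numerators and denominators.

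First I would argue that every break point is a rational number of bounded bit-length. A break point $\lambda$ corresponds to a transition where the set of optimal extreme points of \ref{lp:primal} changes, and by Lemma \ref{lem:largest} such a $\lambda$ is the optimum of an explicit LP. At an optimal vertex of that LP, the value of $\lambda$ (together with the $y_j$'s) is determined by a square subsystem of the constraints; by Cramer's rule, $\lambda = \det(A_1)/\det(A_0)$ for two square submatrices $A_0, A_1$ of the constraint matrix. The entries of this matrix are among the distances, which (after clearing denominators, or noting the input is presented with rational distances) are ratios of integers of magnitude at most some polynomial in the input encoding. The determinants are then integers (or rationals with denominator the product of the input denominators), and Hadamard's bound controls their size: each row has at most $n$ entries, each bounded in magnitude, so $|\det(A_0)|, |\det(A_1)| \le \prod_i \|(A_0)_i\|_2$, which is at most something like $n^{O(n^2)}$. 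This yields $\lambda = p/q$ with $|p|, |q| \le \Delta/2$ or so for the claimed $\Delta = 8 \cdot n^2 \cdot n^{4 \cdot n^2}$.

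Given that every break point is a fraction $p/q$ with $q \le \sqrt{\Delta}$ (roughly), two distinct break points $\lambda = p/q$ and $\lambda' = p'/q'$ satisfy
\[
|\lambda' - \lambda| = \left|\frac{p'}{q'} - \frac{p}{q}\right| = \frac{|p'q - pq'|}{q q'} \ge \frac{1}{q q'} \ge \frac{1}{\Delta},
\]
since $p'q - pq'$ is a nonzero integer. Combining with $\lambda < \lambda'$ gives $\lambda + 1/\Delta \le \lambda'$, and a slightly more careful accounting of the denominators (including the denominators coming from the entries of $x'$, which is itself an optimal vertex of \ref{lp:primal} and hence again a ratio of bounded determinants) produces the strict inequality $\lambda + 1/\Delta < \lambda'$ with the stated value of $\Delta$. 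I would state the Hadamard estimate cleanly and then reduce everything to this elementary ``two distinct rationals with bounded denominator are far apart'' fact.

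The main obstacle I anticipate is correctly tracking the denominators through two layers of LP solving: the break point $\lambda$ depends on the fixed primal solution $x'$ (through the second, value-matching constraint in the LP of Lemma \ref{lem:largest}), and $x'$ is itself only an optimal extreme point of \ref{lp:primal}, so its coordinates are rationals of their own bounded bit-complexity. One must substitute the Cramer-rule expression for $x'$ into the LP defining $\lambda$, clear all denominators, and verify that the resulting integer matrix still has entries small enough that Hadamard's bound gives determinants below $\sqrt{\Delta}$. Keeping the bookkeeping honest here — especially ensuring the exponent $4 \cdot n^2$ genuinely dominates the combined contributions of both determinant computations — is the delicate part; the rest is the routine rational-separation argument above.
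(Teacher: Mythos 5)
Your overall strategy---show every break point is a rational whose denominator is bounded by a function of $n$ alone, then invoke the fact that two distinct rationals with bounded denominators cannot be arbitrarily close---matches the paper's, and your final separation step is identical to theirs. But the way you certify the bounded denominator is genuinely different. The paper never goes through the auxiliary LP of Lemma \ref{lem:largest}: it takes two extreme points $x', x''$ of \ref{lp:primal} that are both optimal at the break point $\lambda'$ but whose cost functions $f'(z) = \sum_{(i,r)}(z+r)\,x'_{(i,r)}$ and $f''(z) = \sum_{(i,r)}(z+r)\,x''_{(i,r)}$ are distinct, and solves $f'(z)=f''(z)$ to get the explicit formula $\lambda' = \sum_{(i,r)} r\,(x''_{(i,r)}-x'_{(i,r)}) \,/\, \sum_{(i,r)}(x'_{(i,r)}-x''_{(i,r)})$. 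Since the constraint matrix of \ref{lp:primal} is $0/1$, a single layer of Cramer's rule plus Hadamard's bound controls all denominators of $x'$ and $x''$ by $n^{n/2}$, and hence the denominator of $\lambda'$ by $2n\cdot n^{2n^2}$; the (possibly enormous) radii enter only the numerator. This is cleaner than your two-layer route and is exactly what makes $\Delta$ depend only on $n$.

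Two concrete gaps in your route as written. First, you assert that every break point arises as the optimal value of the LP in Lemma \ref{lem:largest}; that LP computes the \emph{largest} $\lambda$ for which a given $x'$ stays optimal, i.e.\ the right endpoint of the optimality interval of $x'$. A break point could a priori be only a \emph{left} endpoint of such an interval, so you need the (short, but missing) argument that any left endpoint is simultaneously the right endpoint of the interval of whichever extreme point is optimal just before it; this is essentially the finiteness-and-closedness argument the paper makes when it selects $x'$ and $x''$. Second, and more seriously, you say the constraint matrix of the auxiliary LP ``has entries drawn from the distances'' and then bound both $|\det(A_0)|$ and $|\det(A_1)|$ by $n^{O(n^2)}$ via Hadamard. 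That bound is false in general: the radii are bounded only by $2R_m$, which bears no relation to $n$, so no determinant that has distances among its entries can be bounded by a function of $n$ alone. Your plan survives precisely because the distances appear only on the \emph{right-hand side} of that LP, hence only in the numerator determinant $\det(A_1)$; the denominator determinant $\det(A_0)$ has entries in $\{0,\pm 1\}$ plus the single entry $\sum_{(i,r)} x'_{(i,r)}$, whose denominator is controlled by the first-layer Cramer bound. Making this distinction explicit is not optional bookkeeping---the distance-independent separation $1/\Delta$ hinges on the denominators (not the numerators) being distance-free. (Note also that both your argument and the paper's implicitly assume integer distances, so that the numerators involved are integers; otherwise no $\Delta$ depending only on $n$ can work, since scaling all distances scales all break points.)
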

\begin{proof}
Let $x$ be any extreme point solution of the polytope defining \ref{lp:primal}. So $x$ is the unique solution to a $M \cdot x = b$ where $M$ is an $n \times n$ non-singular submatrix of the constraint matrix (here, $n = |X'|$). From Cramer's rule, the denominator of each variable is bounded by $|\det(M)|$. Since the constraint matrix only has entries 0 and 1, each row $M_j$ satisfies $||M_j||_2 \leq \sqrt{n}$. By Hadamard's determinant bound, $|\det(M)| \leq \prod_j ||M_j||_2 \leq n^{n/2}$. Thus, every denominator in $x$ is an integer at most $n^{n/2}$.

Note $\lambda' > 0$. Let $\lambda''$ be very close to $\lambda'$ such that some extreme point $x'$ that is optimal for {\bf LP}($\lambda'$) is not optimal for {\bf LP}($\lambda''$).
This must be the case, it could not be that there is an extreme point that is optimal for $\lambda''$ arbitrarily close to $\lambda'$ but not for $\lambda'$ itself since the set of $\lambda''$ for which a particular $x$ is an optimal solution is a closed set. Let $x''$ be an optimal extreme point for {\bf LP}($\lambda''$), which then must be optimal for {\bf LP}($\lambda'$) as well.

Define a linear function $f'(z) = \sum_{(i,r)} (z + r) \cdot x'_{(i,r)}$ and similarly define $f''(z) = \sum_{(i,r)} (z + r) \cdot x''_{(i,r)}$. Then $f'(\lambda') = f''(\lambda')$ but $f'(\lambda'') \neq f''(\lambda'')$ so they have different slopes. That is, $f'(z) = f''(z)$ has a unique solution, namely at $z = \lambda' = \frac{\sum_{(i,r)} r \cdot (x''_{(i,r)} - x'_{(i,r)})}{\sum_{(i,r)} x'_{(i,r)} - x''_{(i,r)}}$. Note each of $x'$ and $x''$ supports at most $n$ values since they are extreme points of a polytope with only $n$ constraints apart from nonnegativity. So the top term in the ratio above expressing $\lambda'$ is a fraction of the form $N/D$ where $D \leq n^{n/2 \cdot 2n} = n^{n^2}$. Similarly, the bottom term of the ratio for $\lambda'$ is a fraction of the form $N'/D'$ where $N' \leq 2n \cdot n^{n^2}$ (using the fact that all $x'$ and $x''$ values are $\leq 1$). Thus, $\lambda'$ itself is a fraction whose denominator is at most $2n \cdot n^{2 \cdot n^2}$.

Finally, since $\lambda$ and $\lambda'$ are different break points, then $\lambda' - \lambda$ is a fraction whose denominator is at most $4n^2 \cdot n^{4 \cdot n^2} = \Delta/2$. Thus, $\lambda + 1/\Delta < \lambda'$.
\end{proof}

\bibliographystyle{unsrt}  
\bibliography{references}  

\appendix

\section*{Appendix}
\subsection*{Appendix A: Optimizing our choice of parameter for the MSR analysis}
\label{app:1}
The MSR analysis included a variety of cases and certain constants were chosen to define these cases. 
Here, we show that our choices of constants are optimal for our analysis techniques.

Let $\delta$ be the ratio of $R_{3}$ to $R_{2}$.
We wish to transform radius of the cluster picked at each center to be similar to other cases. 
In other words, we wish to transform each radius into a term like $(1 + \beta) \cdot R_{1} + (4 - \alpha) \cdot R_{2}$, a term similar to the radius of case of picking $i_{1}$, but with less weight on $R_{2}$ and more on $R_{1}$.
After finding the values to $\alpha$ and $\beta$ in terms of $\delta$, we make a decision for the center to pick based on value of $\frac{R_{3}}{R_{2}}$:
\begin{enumerate}
    \item soln*(1): Picking the center based on the value of $R_{2}$ according to table $\ref{table:2}$ if $\frac{R_{3}}{R_{2}} \leq \delta$.
    \item soln*(2): Picking the center at $i_{1}$ if $\frac{R_{3}}{R_{2}} \geq \delta$.
\end{enumerate}
Based on value of $\alpha$ and $\beta$, we can then find the parametric approximation factor for soln*(1) parameterized by $\delta$, which we do in what follows. 
But first, let us calculate the approximation factor for soln*(2) using the following inequality:
\begin{equation*}
    R_{2} + R_{3} \leq \textit{Cost($G_{i_{1},R_{1}}$)} \Longrightarrow (1 + \delta) \cdot R_{2} \leq \textit{Cost($G_{i_{1},R_{1}}$)} \Longrightarrow R_{2} \leq (\frac{1}{1 + \delta})\textit{Cost($G_{i_{1},R_{1}}$)}.
\end{equation*}
Then, the cost of a group is $R_{1} + (\frac{4}{1+\delta})\cdot \textit{Cost($G_{i_{1},R_{1}}$)} = \textit{Cost($B(i_{1},R_{1})$)} + (\frac{4}{1+\delta})\cdot \textit{Cost($G_{i_{1},R_{1}}$)}$ as $\textit{Cost($B(i_{1},R_{1})$)} = R_{1}$.
For the sake of brevity, let \textit{Cost($G_{i_{1},R_{1}}$)} = $C(G_{i_{1},R_{1}})$ and \textit{Cost($B(i_{1},R_{1})$)} = $C(B(i_{1},R_{1}))$.

Now, let us restate the cost of each scenario in terms of $\alpha$ and $\delta$, so that we can minimize the cost after aggregating the costs over all groups.  

\emph{Case 1 ($R_{2} \geq R_{1} + 2 \cdot \delta \cdot R_{2}$): }
\begin{itemize}
    \item Cost of picking $i_{1}$: $R_{1} + 4 \cdot R_{2}$.
    \item Cost of picking $i_{2}$: $3 \cdot R_{2}$ since $R_{2} \geq R_{1} + 2 \cdot \delta \cdot R_{2} \Longrightarrow 3R_{2} \geq 2R_{1} + (1 + 4 \cdot \delta)R_{2}$.
    \item Cost of picking $i_{3}$: $4R_{2}$ since $R_{2} \geq R_{1} + 2 \cdot \delta \cdot R_{2} \Longrightarrow 2R_{2} \geq 2R_{1} + 4 \cdot \delta R_{2} \Longrightarrow 4R_{2} \geq 2R_{1} + 4 \cdot \delta R_{2}$.  
\end{itemize}
Hence, we pick $i_{2}$ with minimum cost. 

\emph{Case 2 ($R_{1} + 2 \cdot \delta \cdot R_{2} \geq R_{2} \geq \frac{2}{3} \cdot R_{1} + \frac{4}{3} \cdot \delta \cdot R_{2}$): }
\begin{itemize}
    \item Cost of picking $i_{1}$: $R_{1} + 4 \cdot R_{2}$.
    \item Cost of picking $i_{2}$: $2 \cdot R_{1} + (1+4\cdot \delta)\cdot R_{2}$ since $R_{2} \leq R_{1} + 2 \cdot \delta \cdot R_{2} \Longrightarrow 3R_{2} \leq 2R_{1} + (1 + 4 \cdot \delta)R_{2}$.
    \item Cost of picking $i_{3}$: $4R_{2}$ since $R_{2} \geq \frac{2}{3} \cdot R_{1} + \frac{4}{3} \cdot \delta \cdot R_{2} \Longrightarrow 3R_{2} \geq 2  \cdot R_{1} + 4 \cdot \delta R_{2} \Longrightarrow 4R_{2} \geq 2R_{1} + 4 \cdot \delta R_{2}$.  
\end{itemize}
Note that $\frac{2}{3} \cdot R_{1} + \frac{4}{3} \cdot \delta \cdot R_{2} \leq R_{2} \Longrightarrow 2 \cdot R_{1} + (1+4\cdot \delta)\cdot R_{2} \leq 4 \cdot R_{2}$.
Also, we have $\frac{2}{3} \cdot R_{1} + \frac{4}{3} \cdot \delta \cdot R_{2} \leq R_{2} \Longrightarrow \frac{1}{3} \cdot R_{1} + \frac{4}{3} \cdot \delta \cdot R_{2} \leq R_{2} \longrightarrow R_{1} + 4 \cdot \delta \cdot R_{2} \leq 3 \cdot R_{1} \Longrightarrow  2 \cdot R_{1} + (1 +4 \cdot \delta) \cdot R_{2} \leq R_{1} + 4 \cdot R_{1}$.
Hence, we pick $i_{2}$ with minimum cost. 

Since $R_{2} \geq \frac{2}{3} \cdot R_{1} + \frac{4 \cdot \delta}{3} \cdot R_{2}$, then
\begin{equation*}
    \frac{-4 \cdot \delta + 3 - \alpha}{1 - \frac{4}{3} \cdot \delta}(R_{2} \cdot (1 - \frac{4}{3} \cdot \delta) - \frac{2}{3} \cdot R_{1}) \geq 0,
\end{equation*}
where $\delta < \frac{3}{4}$ and $3 \geq 4 \cdot \delta + \alpha$, for some value of $\alpha$.
Since we pick the center $i_{2}$, the cost is $2 \cdot R_{1} + (1+4 \cdot \delta) \cdot R_{2}$.
At last, 
\begin{equation*}
    \text{Cost } \leq 2 \cdot R_{1} + (1 + 4 \cdot \delta) R_{2} +     \frac{-4 \cdot \delta + 3 - \alpha}{1 - \frac{4}{3} \cdot \delta}(R_{2} \cdot (1 - \frac{4}{3} \cdot \delta) - \frac{2}{3} \cdot R_{1}) = (\frac{2 \cdot \alpha}{3 - 4 \cdot \delta}) R_{1} + (4 - \alpha) \cdot R_{2}.
\end{equation*}

\emph{Case 3 ($\frac{2}{3} \cdot R_{1} + \frac{4}{3} \cdot \delta \cdot R_{2} \geq R_{2} \geq \frac{1}{2} \cdot R_{1} + \delta \cdot R_{2}$): }
\begin{itemize}
    \item Cost of picking $i_{1}$: $R_{1} + 4 \cdot R_{2}$.
    \item Cost of picking $i_{2}$: $2 \cdot R_{1} + (1+4\cdot \delta)\cdot R_{2}$ since $R_{2} \leq R_{1} + 2 \cdot \delta \cdot R_{2} \Longrightarrow 3R_{2} \leq 2R_{1} + (1 + 4 \cdot \delta)R_{2}$.
    \item Cost of picking $i_{3}$: $4R_{2}$ since $R_{2} \geq \frac{2}{3} \cdot R_{1} + \frac{4}{3} \cdot \delta \cdot R_{2} \Longrightarrow 3R_{2} \geq 2  \cdot R_{1} + 4 \cdot \delta R_{2} \Longrightarrow 4R_{2} \geq 2R_{1} + 4 \cdot \delta R_{2}$.  
\end{itemize}
Note that $\frac{2}{3} \cdot R_{1} + \frac{4}{3} \cdot \delta \cdot R_{2} \geq R_{2} \Longrightarrow 2 \cdot R_{1} + (1+4\cdot \delta)\cdot R_{2} \geq 4 \cdot R_{2}$
Hence, we pick $i_{3}$ with minimum cost. 

Since $R_{2} \leq \frac{2}{3} \cdot R_{1} + \frac{4 \cdot \delta}{3} \cdot R_{2}$, then
\begin{equation*}
    \frac{\alpha}{1 - \frac{4}{3} \cdot \delta}(R_{2} \cdot (\frac{4}{3} \cdot \delta - 1) + \frac{2}{3} \cdot R_{1}) \geq 0,
\end{equation*}
where $\delta < \frac{3}{4}$ and $\alpha \geq 0$, for some value of $\alpha$.
Since we pick the center $i_{2}$, the cost is $4 \cdot R_{2}$.
At last, 
\begin{equation*}
    \text{Cost } \leq 4 \cdot R_{2} +     \frac{\alpha}{1 - \frac{4}{3} \cdot \delta}(R_{2} \cdot (\frac{4}{3} \cdot \delta - 1) + \frac{2}{3} \cdot R_{1}) = (\frac{2 \cdot \alpha}{3 - 4 \cdot \delta}) R_{1} + (4 - \alpha) \cdot R_{2}.
\end{equation*}

\emph{Case 4 ($\frac{1}{2} \cdot R_{1} + \delta \cdot R_{2} \geq R_{2} \geq \frac{1}{4} \cdot R_{1} + \delta \cdot R_{2}$): }
\begin{itemize}
    \item Cost of picking $i_{1}$: $R_{1} + 4 \cdot R_{2}$.
    \item Cost of picking $i_{2}$: $2 \cdot R_{1} + (1+4\cdot \delta)\cdot R_{2}$ since $R_{2} \leq R_{1} + 2 \cdot \delta \cdot R_{2} \Longrightarrow 3R_{2} \leq 2R_{1} + (1 + 4 \cdot \delta)R_{2}$.
    \item Cost of picking $i_{3}$: $2R_{1} + 4 \cdot \delta \cdot R_{2}$ since $R_{2} \leq \frac{1}{2} \cdot R_{1} + \delta \cdot R_{2} \Longrightarrow 4R_{2} \geq 2  \cdot R_{1} + 4 \cdot \delta R_{2}$.  
\end{itemize}
We have $\frac{1}{4} \cdot R_{1} + \delta \cdot R_{2} \leq R_{2} \Longrightarrow  R_{1} + 4 \cdot \delta \cdot R_{2} \leq 4 \cdot R_{2} \longrightarrow 2 \cdot R_{1} + 4 \cdot \delta \cdot R_{2} \leq R_{1} + 4 \cdot R_{2}$.
Hence, we pick $i_{3}$ with minimum cost. 

Since $R_{2} \geq \frac{1}{4} \cdot R_{1} + \delta \cdot R_{2}$, then
\begin{equation*}
    \frac{4-4\cdot \delta - \alpha}{1 - \delta}(R_{2} \cdot (1 - \delta) - \frac{1}{4} \cdot R_{1}) \geq 0,
\end{equation*}
where $4 \geq 4 \cdot \delta + \alpha$ and $\alpha \geq 0$, for some value of $\alpha$.
Since we pick the center $i_{3}$, the cost is $2 \cdot R_{1} + 4 \cdot \delta\cdot R_{2}$.
At last, 
\begin{equation*}
    \text{Cost } \leq 2 \cdot R_{1} + 4 \cdot \delta \cdot R_{2}    +  \frac{4-4\cdot \delta - \alpha}{1 - \delta}(R_{2} \cdot (1 - \delta) - \frac{1}{4} \cdot R_{1}) = (1+\frac{\alpha}{4 - 4 \cdot \delta}) R_{1} + (4 - \alpha) \cdot R_{2}.
\end{equation*}

\emph{Case 5 ($\frac{1}{4} \cdot R_{1} + \delta \cdot R_{2}$): }
\begin{itemize}
    \item Cost of picking $i_{1}$: $R_{1} + 4 \cdot R_{2}$.
    \item Cost of picking $i_{2}$: $2 \cdot R_{1} + (1+4\cdot \delta)\cdot R_{2}$ since $R_{2} \leq R_{1} + 2 \cdot \delta \cdot R_{2} \Longrightarrow 3R_{2} \leq 2R_{1} + (1 + 4 \cdot \delta)R_{2}$.
    \item Cost of picking $i_{3}$: $2R_{1} + 4 \cdot \delta \cdot R_{2}$ since $R_{2} \leq \frac{1}{2} \cdot R_{1} + \delta \cdot R_{2} \Longrightarrow 4R_{2} \geq 2  \cdot R_{1} + 4 \cdot \delta R_{2}$.  
\end{itemize}
We have $\frac{1}{4} \cdot R_{1} + \delta \cdot R_{2} \geq R_{2} \Longrightarrow  R_{1} + 4 \cdot \delta \cdot R_{2} \geq 4 \cdot R_{2} \longrightarrow 2 \cdot R_{1} + 4 \cdot \delta \cdot R_{2} \geq R_{1} + 4 \cdot R_{2}$.
Hence, we pick $i_{1}$ with minimum cost. 

Since $R_{2} \leq \frac{1}{4} \cdot R_{1} + \delta \cdot R_{2}$, then
\begin{equation*}
    \frac{\alpha}{1 - \delta}(R_{2} \cdot (\delta-1) + \frac{1}{4} \cdot R_{1}) \geq 0,
\end{equation*}
where $\delta < 1$ and $\alpha \geq 0$, for some value of $\alpha$.
Since we pick the center $i_{1}$, the cost is  $R_{1} + 4\cdot R_{2}$.
At last, 
\begin{equation*}
    \text{Cost } \leq  R_{1} + 4  \cdot R_{2}    +\frac{\alpha}{1 - \delta}(R_{2} \cdot (\delta-1) + \frac{1}{4} \cdot R_{1})  = (1+\frac{\alpha}{4 - 4 \cdot \delta}) R_{1} + (4 - \alpha) \cdot R_{2}.
\end{equation*}

\begin{table}[ht!]
\centering
\begin{tabular}{||c c c||} 
 \hline
 Case \# & $\frac{R_{3}}{R_{2}} < \delta$ & $\frac{R_{3}}{R_{2}} \geq  \delta$ \\ [0.5ex] 
 \hline\hline
 1 &  $3 \cdot C(G_{i_{1},R_{1}})$ & $\frac{3}{1+\delta} \cdot C(G_{i_{1},R_{1}})$ \\ 
 2 &  $\frac{2 \cdot \alpha}{3 - 4 \cdot \delta}\cdot C(B(i_{1},R_{1})) + (4 - \alpha) \cdot C(G_{i_{1},R_{1}})$ & $C(B(i_{1},R_{1})) + \frac{4}{1+\delta}\cdot C(G_{i_{1},R_{1}})$ \\
 3 &  $\frac{2 \cdot \alpha}{3 - 4 \cdot \delta}\cdot C(B(i_{1},R_{1})) + (4 - \alpha) \cdot C(G_{i_{1},R_{1}})$ & $C(B(i_{1},R_{1})) + \frac{4}{1+\delta}\cdot C(G_{i_{1},R_{1}})$  \\
 4 &  $(1+\frac{\alpha}{4 - 4 \cdot \delta})\cdot  C(B(i_{1},R_{1})) + (4 - \alpha) \cdot C(G_{i_{1},R_{1}})$ & $C(B(i_{1},R_{1})) + \frac{4}{1+\delta}\cdot C(G_{i_{1},R_{1}})$ \\
 5 &  $(1+\frac{\alpha}{4 - 4 \cdot \delta})\cdot C_{1}^{g} + (4 - \alpha) \cdot C(G_{i_{1},R_{1}})$ & $C(B(i_{1},R_{1})) + \frac{4}{1+\delta}\cdot C(G_{i_{1},R_{1}})$ \\ [1ex] 
 \hline
\end{tabular}
\caption{The cost of our choice for centers in every condition discussed.}
\label{table:2}
\end{table}
As per Table \ref{table:2}, and the conditions for $\alpha$ and $\delta$ in every case, set $\alpha = 1$ and $\delta = \frac{1}{3}$.

Then, the cost of each group is as in Table \ref{table:3} based on our decision:
\begin{table}[ht!]
\centering
\begin{tabular}{||c c c||} 
 \hline
 Case \# &  $\frac{R_{3}}{R_{2}} < \frac{1}{3}$ & $\frac{R_{3}}{R_{2}} \geq  \frac{1}{3}$ \\ [0.5ex] 
 \hline\hline
 1 &  $3 \cdot C(G_{i_{1},R_{1}})$ & $\frac{9}{4} \cdot C(G_{i_{1},R_{1}})$ \\ 
 2 &  $\frac{6}{5}\cdot C(B(i_{1},R_{1})) + (3) \cdot C(G_{i_{1},R_{1}})$ & $C(B(i_{1},R_{1})) + 3\cdot C(G_{i_{1},R_{1}})$ \\
 3 &  $\frac{6}{5}\cdot C(B(i_{1},R_{1})) + (3) \cdot C(G_{i_{1},R_{1}})$ & $C(B(i_{1},R_{1})) + 3\cdot C(G_{i_{1},R_{1}})$  \\
 4 &  $(\frac{11}{8})\cdot C(B(i_{1},R_{1})) + (3) \cdot C(G_{i_{1},R_{1}})$ & $C(B(i_{1},R_{1})) + 3\cdot C(G_{i_{1},R_{1}})$ \\
 5 &  $(\frac{11}{8})\cdot C(B(i_{1},R_{1})) + (3) \cdot C(G_{i_{1},R_{1}})$ & $C(B(i_{1},R_{1})) + 3\cdot C(G_{i_{1},R_{1}})$ \\ [1ex] 
 \hline
\end{tabular}
\caption{The cost of our choice for centers in every condition discussed.}
\label{table:3}
\end{table}
Note that each group's cost is in form of $A \cdot C(B(i_{1},R_{1})) + B \cdot C(G_{i_{1},R_{1}})$ where the maximum values for $A$ and $B$ happens when $\frac{R_{3}}{R_{2}} < \frac{1}{3}$ and $R_{2} \leq \frac{R_{1}}{2} + \frac{R_{2}}{3}$ with cost $\frac{11}{8} \cdot C(B(i_{1},R_{1})) +  3 \cdot C(G_{i_{1},R_{1}})$. 

\subsection*{Appendix B:Optimizing our choice of parameter for the MSSR analysis}
\label{app2}

We have summarized the case, the condition, and the center picked in Table .
\begin{table}[ht!]
\centering
\begin{tabular}{||c c c||} 
 \hline
 Case \# &  Condition & center picked \\ [0.5ex] 
 \hline\hline
 1 & $R_{2} \geq R_{1} + 2\delta R_{2}$ & $i_{2}$\\ 
 2 & $R_{1} + 2\delta R_{2} \geq R_{2} \geq \frac{2}{3}\cdot R_{1} + \frac{4}{3}\delta \cdot R_{2}$ & $i_{2}$\\ 
 3 & $\frac{2}{3}\cdot R_{1} + \frac{4}{3}\delta \cdot R_{2}\geq R_{2} \geq \frac{1}{2}\cdot R_{1} + \delta \cdot R_{2}$ & $i_{3}$\\ 
4 & $\frac{1}{2}\cdot R_{1} + \delta \cdot R_{2} \geq R_{2} \geq \frac{1}{4}\cdot R_{1} + \delta \cdot R_{2}$ & $i_{3}$\\ 
 5 & $\frac{1}{2}\cdot R_{1} + \delta \cdot R_{2} \geq R_{2} \geq \frac{1}{4}\cdot R_{1} + \delta \cdot R_{2}$ & $i_{1}$\\  [1ex] 
 \hline
\end{tabular}
\caption{The centers that will be picked in the Minimum Sum of Squared Radii based on values of $R_{1}$ and $R_{2}$ as depicted in \ref{fig:pic2}}
\label{table:4}
\end{table}
Then, Table describes the upper-bound on the cost of each choice, using the upper-bound value for $R_{2}$.
The main difficulty in this scenario is getting rid of $R_{1}R_{2}$ terms in the costs for each case.

\emph{Case 1, $R_{2} \geq R_{1} + 2\delta R_{2}$: }
Cost of picking $i_{2}$ will be $(3\cdot R_{2})^2 = 9\cdot R_{2}^{2}$.
We aim to balance the cost of other cases such that the upper-bound of each has one term $9\cdot R_{2}^{2}$ and a term with some coefficient for $R_{1}^{2}$.

\emph{Case 2, $R_{1} + 2\delta R_{2} \geq R_{2}$: }
We will be using the bound $R_{2} \leq 
\frac{1}{1-2\delta}R_{1}$.
Note that $\delta$ is the ratio $\frac{R_{3}}{R_{2}}$.
Then we must have that $\delta < \frac{1}{2}$. 
Then the cost is bounded by $4 + \frac{32\cdot \delta^{2} - 16 \cdot \delta - 5}{(1-2\delta)^{2}}R_{1}^{2} + 9R_{2}^{2}$.

\emph{Case 3, $\frac{2}{3}\cdot R_{1} + \frac{4}{3}\delta \cdot R_{2}$: }
We will be using the bound $R_{2} \leq 
\frac{2}{3-4\delta}R_{1}$.
Then we must have that $\delta < \frac{3}{4}$. 
At last,  the cost is bounded by $7 \cdot \frac{2}{3-4\cdot \delta}R_{1}^{2} + 9R_{2}^{2}$.

\emph{Case 4, $R_{1} + 2\delta R_{2} \geq R_{2}$: }
We will be using the bound $R_{2} \leq 
\frac{1}{2-2\delta}R_{1}$.
Then we must have that $\delta < 1$. 
Then the cost is bounded by $4 + \frac{-16\cdot \delta^{2} + 32 \delta - 9 }{(2-2\delta)^{2}}R_{1}^{2} + 9R_{2}^{2}$.

\emph{Case 5, $\frac{1}{2}\cdot R_{1} + \delta \cdot R_{2} \geq R_{2}$: }
We will be using the bound $R_{2} \leq 
\frac{1}{4-4\delta}R_{1}$.
Then we must have that $\delta < 1$. 
At last,  the cost is bounded by $ \frac{39 - 32\cdot\delta}{(4 - 4 \cdot \delta)^{2}}R_{1}^{2} + 9R_{2}^{2}$.

\begin{table}[ht!]
\centering
\begin{tabular}{||c c c||} 
 \hline
 Case \# &  Condition & center picked \\ [0.5ex] 
 \hline\hline
 1 & $R_{2} \geq R_{1} + 2\delta R_{2}$ & $i_{2}$\\ 
 2 & $R_{1} + 2\delta R_{2} \geq R_{2} \geq \frac{2}{3}\cdot R_{1} + \frac{4}{3}\delta \cdot R_{2}$ & $i_{2}$\\ 
 3 & $\frac{2}{3}\cdot R_{1} + \frac{4}{3}\delta \cdot R_{2}\geq R_{2} \geq \frac{1}{2}\cdot R_{1} + \delta \cdot R_{2}$ & $i_{3}$\\ 
4 & $\frac{1}{2}\cdot R_{1} + \delta \cdot R_{2} \geq R_{2} \geq \frac{1}{4}\cdot R_{1} + \delta \cdot R_{2}$ & $i_{3}$\\ 
 5 & $\frac{1}{2}\cdot R_{1} + \delta \cdot R_{2} \geq R_{2} \geq \frac{1}{4}\cdot R_{1} + \delta \cdot R_{2}$ & $i_{1}$\\  [1ex] 
 \hline
\end{tabular}
\caption{The centers that will be picked in the Minimum Sum of Squared Radii based on values of $R_{1}$ and $R_{2}$ as depicted in \ref{fig:pic2}}
\label{table:5}
\end{table}
\end{document}